\theoremstyle{plain}
\newtheorem{theorem}{Theorem}
\newtheorem{lemma}[theorem]{Lemma}
\newtheorem{corollary}[theorem]{Corollary}
\theoremstyle{remark}
\newtheorem{remark}{Remark}
\theoremstyle{definition}
\newtheorem{definition}[theorem]{Definition}
\newcommand{\R}{{\mathbb R}}
\newcommand{\eps}{\varepsilon}
\newcommand \argmin {\mathop\mathrm{arg\,min}}
\newcommand \dom {\mathop\mathrm{dom}}
\newcommand \prox {\mathop\mathrm{prox}}
\newcommand \tr {\mathop\mathrm{tr}}
\newcommand \xks {x_\mathrm{KS}}
\begin{document}

\title{Generalized Kohn--Sham iteration on Banach spaces}

\author{Andre Laestadius}

\email{andre.laestadius@kjemi.uio.no}
\affiliation{
	Hylleraas Centre for Quantum Molecular Sciences, Department of Chemistry, University of Oslo, P.O.~Box 1033 Blindern, N-0315 Oslo, Norway}

\author{Markus Penz}

\affiliation{
Max Planck Institute for the Structure and Dynamics of Matter, Luruper Chausse 149, 22761 Hamburg, Germany}

\author{Erik I. Tellgren}

\affiliation{
Hylleraas Centre for Quantum Molecular Sciences, Department of Chemistry, University of Oslo, P.O.~Box 1033 Blindern, N-0315 Oslo, Norway}

\author{Michael Ruggenthaler}
\affiliation{
Max Planck Institute for the Structure and Dynamics of Matter, 
Luruper Chausse 149, 22761 Hamburg, Germany}

\author{Simen Kvaal}

\affiliation{
	Hylleraas Centre for Quantum Molecular Sciences, Department of Chemistry, University of Oslo, P.O.~Box 1033 Blindern, N-0315 Oslo, Norway}

\author{Trygve Helgaker}

\affiliation{
Hylleraas Centre for Quantum Molecular Sciences, Department of Chemistry, University of Oslo, P.O.~Box 1033 Blindern, N-0315 Oslo, Norway}
\affiliation{Centre for Advanced Study at the Norwegian Academy of Science and Letters,
Drammensveien 78, N-0271 Oslo, Norway
}

\date{\today}

\begin{abstract}
A detailed account of the Kohn--Sham algorithm from quantum chemistry,
formulated rigorously in the very general setting of convex analysis on Banach
spaces, is given here. Starting from a Levy--Lieb-type functional, its convex and
lower semi-continuous extension is regularized to obtain
differentiability. This extra layer allows to rigorously introduce, in contrast to the
common unregularized approach, a well-defined Kohn--Sham iteration scheme. 
Convergence in a weak sense is
then proven. This generalized formulation is applicable to a wide range of
different density-functional theories and possibly even to models
outside of quantum mechanics.
\end{abstract}

\maketitle

\section{Introduction}
Density-functional theory (DFT) is usually presented based on the Hohenberg--Kohn theorem~\cite{HK64} and received a thorough mathematical investigation in terms of convex analysis by Lieb \cite{Lieb83}. Yet it is the Kohn--Sham (KS) iteration scheme \cite{KS}, built upon the consequences of the Hohenberg--Kohn theorem, that makes it applicable to problems of quantum chemistry and where it developed an unprecedented utility. In Kohn--Sham theory a complex, interacting system with basic state variable $x$ (density) is compared against a simpler, non-interacting reference system that reproduces the exact same state $x$.
To achieve that, the reference system is complemented with an auxiliary potential $\xks^*$ that acts as a dual system variable and gets fixed by the connection of both systems via their respective energy functionals. 
Since no explicit expression for the Kohn--Sham potential $\xks^*$ is
at hand, a clever self-consistent iteration scheme is set up (see Section~\ref{sec:ks}).\\
\indent Nonetheless, up until now very few results about the convergence of this procedure are at hand. Among mathematical analyses of the self-consistent field procedure used to solve the Hartree--Fock or Kohn--Sham equations, we want to highlight in particular the work on the optimal damping algorithm (ODA) formulated in terms of one-particle reduced density matrices and Kohn--Sham matrices~\cite{Cances2000,Cances2000b,Cances2001}.\\
\indent In a setting where computationally efficient, approximate density
functionals are used, the Kohn--Sham wave function as the underlying
physical structure is crucial for providing a good first approximation
to the kinetic energy, since this is, in practice, very hard to
approximate from the density alone. 
However, in the rigorous mathematical setting of this work, where
properties of the exact density functionals are analyzed and employed,
the Kohn--Sham wave function itself does not show up. 
Somewhat unconventionally, one
can therefore choose to work with the exact energy of the Kohn--Sham
system as a (orbital-free) density functional. This makes the whole
analysis just as much applicable to orbital-free methods like
Thomas--Fermi theory. 
The analogue of the self-consistent field procedure in this setting is then a procedure that determines a sequence of densities and Kohn--Sham potentials that converges to a self-consistent pair. Working in this setting, Wagner et al.~\cite{Wagner2013} adapted the ODA and claimed,  unfortunately mistakenly, to show convergence to a self-consistent pair of a ground-state density and a Kohn--Sham potential.
Furthermore, the usual setting of Kohn--Sham theory is traditionally built on ill-defined quantities (cf. Section~\ref{sec:generalproblem}). This situation changed only recently when work by Kvaal et al.~\cite{Kvaal2014} showed how Moreau--Yosida regularization of convex functionals can be employed to make the Kohn--Sham theory rigorous. Specifically, Kvaal et al.~\cite{Kvaal2014} presented a well-defined iteration scheme based on a Hilbert space setting for density and potential variables.
However, the important problem of convergence was not addressed in this mathematically strict formulation of the theory.\\
\indent In this work we aim at closing gaps of Kohn--Sham theory while at the same time considerably extending its scope.
Important results include the following:
(i)~In Section \ref{sec:Mono}, Lemma \ref{lemma:monotone} gives an abstract, convex analytical version of the Hohenberg--Kohn theorem as an automatic consequence of strict monotonicity of the superdifferential of the energy functional.
(ii)~Section \ref{sec:my} is a full generalization of the regularization tools of Kvaal et al.~\cite{Kvaal2014} to reflexive Banach spaces that will be employed to formulate a rigorous Kohn--Sham iteration scheme in Section~\ref{sec:ks}. We especially discuss G\^{a}teaux and Fr\'{e}chet differentiability of regularized functionals (Theorem~\ref{thm:258cropped}) and the connection between solutions to the regularized ground-state problem and ``physical'' solutions (Corollary~\ref{cor:proximal}).
(iii)~Finally, Section~\ref{sec:ks} with Theorem~\ref{th:ks} gives the Kohn--Sham iteration scheme and shows a weak form of convergence. 
This result is based on an idea by Wagner et al.~\cite{Wagner2013} but here repeated for the well-defined setting of regularized functionals.
Gaps in the previously attempted convergence proof in~\cite{Wagner2013} are uncovered in Remark~\ref{remark:wagner}.
A generalized Banach space formulation of DFT has the advantage of
laying possible solid foundations to many types of ground-state DFTs,
including more recent developments in current-density-functional
theory \cite{Tellgren2017}, quantum-electrodynamical DFT \cite{Rugg2017}, thermal DFT
\cite{Eich2016}, and reduced density matrix functional theory
\cite{Pernal2015,Giesbertz2017}. 
Note that, even though most of the presentation has a link to quantum mechanics, the formulation here is kept in a general fashion that might prove valuable for applications to other fields.
The usual DFT setting for many-electron quantum systems is presented in Section~\ref{sec:applications}, where we also discuss extensions of the standard theory that include magnetic fields.

\section{Preliminaries}
\subsection{The general problem}
\label{sec:generalproblem}

The principal state variable $x$ (density) is chosen from a real Banach space $X$.
The dual space of $X$ (consisting of potentials) is 
\[
X^* = \{ x^*: X \to \mathbb R \mid \text{$x^*$ linear and continuous}    \}
\]
and we write $\langle x^*,x\rangle = x^*(x)$ for the dual pairing between $x\in X$ and $x^*\in X^*$. 
In DFT, one is studying a variational problem that consists of finding the $N$-electron ground-state density corresponding
to minimal energy (see Section~\ref{sec:standard-dft}). 
For the general problem we consider an energy functional $E: X^*\to \mathbb R$ 
given by 
\begin{align}
\label{eq:Edef}
E(x^*) = \inf \{ \tilde F(x) + \langle x^*,x \rangle \mid x \in \tilde X   \},
\end{align}
for some functional $\tilde F: \tilde X \to \mathbb R$, $\tilde X \subseteq X$, that originates from the underlying physical problem. In such models $\tilde F(x)$ stands for all energy contributions of internal effects, while the dual pairing $\langle x^*,x \rangle$ represents the potential energy that is seen as an external and controllable effect. 
The domain of $\tilde F$ is limited to a certain set of ``physical'' densities $\tilde X$ that by themselves usually do not form a linear space and $\tilde F(x)$ represents the minimal possible internal energy for a given state $x$. Such a $\tilde F$, called the Levy--Lieb functional~\cite{Lieb83,Levy79} in the DFT literature, is in general not convex or lower semi-continuous.
Therefore one introduces (borrowing terminology from DFT) the universal Lieb functional, $F:X\to \mathbb R\cup \{+\infty\}$, as \cite{Lieb83}
\begin{align}
\label{eq:Fdef}
F(x) = \sup\{ E(x^*) - \langle x^*,x\rangle \mid x^*\in X^*   \},
\end{align}
in terms of which we may obtain the energy as
\begin{align}
\label{eq:EdefF}
E(x^*) = \inf \{  F(x) + \langle x^*,x \rangle \mid x \in X   \}.
\end{align}
The functional $F$ is by construction convex and lower semi-continuous \cite[Theorem~3.6]{Lieb83} and has $F(x) = +\infty$ whenever $x \in X \setminus \tilde X$ (corresponding to ``unphysical'' densities). 
If one chooses $x^*\in X^*$ such that a minimizer $x\in \tilde X$ for \eqref{eq:Edef} exists and thus $E(x^*) = \tilde F(x) + \langle x^*, x\rangle$, then it follows after insertion into \eqref{eq:Fdef} that $F(x) = \tilde F(x)$. 
Each minimizer in~\eqref{eq:Edef} is therefore a minimizer in~\eqref{eq:EdefF} but the converse does not hold. A minimizer in~\eqref{eq:EdefF} need not be a minimizer of~\eqref{eq:Edef} and in general 
$F \leq \tilde F$ on $\tilde X$.\\
\indent To find the state $x$ of minimal energy in \eqref{eq:Edef} by just relying on $F$, we must determine
\begin{equation}\label{eq:argminenergy}
\argmin \{  F(x) + \langle x^*,x\rangle \mid x\in X  \},
\end{equation}
calling for differentiation of $F(x) + \langle x^*,x\rangle$ with respect to $x$.
But even though $F$ already has some nice properties, one cannot assume that $F$ is differentiable \cite{Lammert2007}. Yet the more general notion of ``subdifferential'' (see Definition~\ref{def:diff}) gives a non-unique, implicit answer to \eqref{eq:argminenergy}, namely
\begin{equation}\label{eq:varprob1}
-x^* \in \underline \partial F(x) \iff E(x^*) =  F(x) + \langle x^*,x \rangle .
\end{equation}
For a given density $x$, a non-empty subdifferential means that this state is ``$v$-representable'' 
(we will use this term commonly found in the literature instead of
calling it ``$x^*$-representable''), meaning that
it can be found as the minimizer to \eqref{eq:Edef} with a corresponding potential in $X^*$. 
(Note that the concept of $v$-representability depends on the choice of $X$ and $X^*$.)
For given $x^*$, \eqref{eq:varprob1} is the variational problem that finally gets approximated with the aid of a Kohn--Sham reference system. To facilitate the iteration scheme with well-defined quantities, Moreau--Yosida regularization (see Section~\ref{sec:my}) is employed. This yields a functional $F_\eps$ that is also functionally differentiable.
But before that, we give the relation between the fundamental functionals $E$ and $F$ a strict mathematical meaning.

\subsection{Convex conjugates}
To start with, we say that a convex function $f : X\to \mathbb R\cup \{+\infty\}$ is proper if not identical to $+\infty$.
Let $\Gamma_0(X)$ denote the set of proper, convex, lower semi-continuous functions $X \to \mathbb R\cup \{+\infty\}$. 
On the dual side, $\Gamma_0^*(X^*)$ is the set of proper convex and weak-* lower semi-continuous functions $X^* \to \mathbb R\cup \{+\infty\}$.
We also introduce the sets $\Gamma(X) = \Gamma_0(X) \cup \{\pm \infty\}$ and $\Gamma^*(X^*) = \Gamma_0^*(X^*) \cup \{\pm \infty\}$.
Collectively, these functions are known as the closed convex functions on $X$. Likewise, $\Gamma_0^*(X^*)$ contains the closed convex functions on $X^*$.\\
\indent Key in the presented framework will be two close relatives to the convex conjugate (Legendre--Fenchel transform), a non-standard definition introduced in \cite{Lieb83,Kvaal2014}. 
For $f: X \to \mathbb R \cup \{\pm \infty \}$ and $g: X^* \to \mathbb R \cup \{\pm \infty \}$, we define the (skew) conjugate functions: 
%
\begin{align*}
f^\wedge(x^*) &= \inf \{ f(x) + \langle x^*,x\rangle \mid x\in X \}, \,\, f^\wedge \in -\Gamma^*(X^*), \\
g^\vee(x) &= \sup\{ g(x^*) -\langle x^*,x\rangle \mid x^* \in X^* \}, \,\, g^\vee \in \Gamma(X). 
\end{align*}
%
The following non-trivial result relates $f$ and $g$ to their biconjugates $(f^\wedge)^\vee$ and
$(g^\vee)^\wedge$, respectively:
\begin{theorem}[Fenchel--Moreau biconjugation]
\begin{align*}
f=(f^\wedge)^\vee &\iff f \in \Gamma(X), \\
g=(g^\vee)^\wedge & \iff g \in -\Gamma^*(X^*).
\end{align*}
\end{theorem}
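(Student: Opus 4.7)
The plan is to reduce both equivalences to the classical Fenchel--Moreau biconjugation theorem by a sign change, and then to obtain the converse directions from general topological properties of suprema of continuous affine functions.

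First I would record the trivial one-sided inequalities. For any $f$, taking the infimum-defining variable equal to $x$ in $f^\wedge$ yields $f^\wedge(x^*) \le f(x) + \langle x^*, x\rangle$, and rearranging and taking a supremum over $x^*$ gives $(f^\wedge)^\vee(x) \le f(x)$. Dually, $(g^\vee)^\wedge(x^*) \ge g(x^*)$ by the analogous computation. These hold unconditionally, so the content of the theorem lies in the reverse inequalities and in the topological characterisation of the classes $\Gamma(X)$ and $-\Gamma^*(X^*)$.

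Next I would translate $\wedge$ and $\vee$ into the ordinary Legendre--Fenchel conjugate $f^*(y^*) = \sup_x\{\langle y^*, x\rangle - f(x)\}$. A direct manipulation shows $f^\wedge(x^*) = -f^*(-x^*)$, and the substitution $y^* = -x^*$ in the outer supremum then gives $(f^\wedge)^\vee = f^{**}$. The classical Fenchel--Moreau theorem (for the duality $\langle X, X^*\rangle$ with the norm topology on $X$) yields $f^{**} = f$ iff $f \in \Gamma_0(X)$; the edge cases $f \equiv \pm\infty$ satisfy the identity trivially and sit in $\Gamma(X)\setminus\Gamma_0(X)$, completing the first equivalence. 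The converse direction (if $f = (f^\wedge)^\vee$ then $f \in \Gamma(X)$) comes for free, because the right-hand side is by construction a pointwise supremum of continuous affine functions $x \mapsto f^\wedge(x^*) - \langle x^*, x\rangle$, hence convex and lower semi-continuous.

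The second statement is handled symmetrically by working on $X^*$ with the weak-$*$ topology; crucially, the $w^*$-continuous linear functionals on $X^*$ are exactly the evaluations $x^* \mapsto \langle x^*, x\rangle$ for $x \in X$, which is the family of affine functions produced by the inner infimum in $(g^\vee)^\wedge$. An analogous sign substitution identifies $(g^\vee)^\wedge$ with the standard biconjugate of $g$ in the duality $\langle X^*, X\rangle$, and Fenchel--Moreau in this weak-$*$ setting gives equality iff $g$ is proper, convex, and $w^*$-lsc; the inf/sup inversion between $\vee$ and $\wedge$ is precisely what places the fixed points in $-\Gamma^*(X^*)$ rather than $\Gamma^*(X^*)$.

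The main obstacle I foresee is the bookkeeping of signs and topologies: one must be careful to equip $X^*$ with the weak-$*$ topology so that the affine functions generated by $\vee$ match the $w^*$-continuous dual, because using the norm topology on a non-reflexive $X^*$ would break biconjugation. Beyond this, the analytic heart of the argument is the Hahn--Banach separation of the (closed convex) epigraph of $f$ from any strictly sub-epigraphic point, which I would simply cite from the standard convex-analysis literature rather than reprove.
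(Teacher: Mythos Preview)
The paper does not supply a proof of this theorem; it is stated as a classical result (``The following non-trivial result\ldots'') and used without argument. So there is no paper proof to compare against.

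Your reduction is correct. The identification $f^\wedge(x^*) = -f^*(-x^*)$ and the substitution $y^* = -x^*$ do give $(f^\wedge)^\vee = f^{**}$, after which the classical Fenchel--Moreau theorem on $X$ yields the first equivalence, with the constant $\pm\infty$ functions handled by direct inspection. For the second line, writing $h = -g$ and computing as you indicate gives $(g^\vee)^\wedge = -h^{**}$ for the biconjugate in the dual pair $\langle X^*, X\rangle$; your emphasis on the weak-$*$ topology here is exactly the right point, since only then are the continuous affine functions on $X^*$ precisely those generated by elements of $X$, which is what makes Fenchel--Moreau characterise $\Gamma_0^*(X^*)$ rather than the (possibly larger) class of norm-lsc convex functions. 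The converse directions follow, as you note, because a pointwise supremum (resp.\ infimum) of continuous affine functions is automatically closed convex (resp.\ closed concave). Your proposal is sound and amounts to a clean derivation from the standard theorem via sign bookkeeping.
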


This theorem sets up a one-to-one correspondence between
the closed convex functions on $X$ and the closed concave functions on $X^*$. 
If $f$ and $g$ are  not closed convex and concave, respectively,
then the weaker results $f \geq (f^\wedge)^\vee$ and $g \leq (g^\vee)^\wedge$ hold
and we may think of biconjugation as performing ``gamma regularizations'' of $f$ and $g$. 
\\
\indent We have already used the skew conjugations in \eqref{eq:Edef}--\eqref{eq:EdefF} above, which may now be more
succinctly written in the form (having extended $\tilde F$ to to the whole $X$ by setting it equal to $+\infty$ outside $\tilde X$):
\begin{equation*}
E = \tilde{F}^\wedge,\quad
F = E^\vee, \quad 
E = F^\wedge .
\end{equation*}
We note that the ground-state energy is closed concave
$E \in -\Gamma_0^*(X^*)$ and that the closed convex Lieb functional
$F = (\tilde F^\wedge)^\vee \in \Gamma_0(X)$
is the gamma regularization of the Levy--Lieb functional $\tilde F \notin \Gamma(X)$.
(In the DFT context, the Lieb functional is the gamma regularization of any 
admissible density functional in the Hohenberg--Kohn variational principle.)

\subsection{Banach space derivatives}

\begin{definition} \label{def:diff}
	We give the following notions for derivatives of a function $f:X \to \mathbb R \cup \{+\infty \}$:
	\begin{enumerate}[(i)]
		\item $f$ is Gâteaux differentiable at $x\in \dom(f)$ if there exists a $\nabla f(x)\in X^*$ such that for all $u\in X$
		\begin{align*}
		\lim_{t\to 0}  \frac 1 t (f(x + tu) - f(x))= \langle \nabla f(x),u\rangle.
		\end{align*}
		
		\item $f$ is Fréchet differentiable at $x\in \dom(f)$ if there exists a $\nabla f(x) \in X^*$ such that 
		\begin{align*}
		\lim_{h\to 0}  \Vert h \Vert^{-1} (f(x + h) - f(x) - \langle \nabla f(x), h\rangle)=0.
		\end{align*}
		
		\item $f$ proper convex is subdifferentiable at $x \in X$ if it has a nonempty subdifferential $\underline \partial f(x) \subset X^*$ at $x$ given by
		\begin{align*}
		\underline \partial f(x) = \{& x^* \in X^* \mid 
		\forall y\in X:\\ & f(x)-f(y) \leq \langle x^*,x-y\rangle \}.
		\end{align*}
The elements of $\underline{ \partial} f(x)$ are known as the subgradients of $f$ at $x$.
	\end{enumerate}
\end{definition}

We denote by $\dom(f)$ the effective domain  of $f$ (the subset of $X$ where $f$ is finite) and by $\dom(\underline \partial f)$
the domain of subdifferentiability of $f$ (the subset of $X$ where $\underline \partial f$ is non-empty). We note that $\dom(\underline \partial f) \subset \dom(f)$
but $\underline \partial f(x) = \emptyset$ may happen also for $x \in \dom(f)$. 
If $f$ is closed convex, then $\dom(\underline \partial f)$ is dense in $\dom(f)$  by the Br{\o}ndsted--Rockafellar theorem.\\
\indent On the dual space, we mostly work with proper concave functions, whose superdifferentials are defined by 
\begin{equation}\label{eq:superdiff}
\begin{aligned}
\overline {\partial} g(x^*) = \{& x \in X \mid 
\forall y^* \in X^*\!:\\& g(x^*)-g(y^*) \geq \langle x^*-y^*, x\rangle \}.
\end{aligned}
\end{equation}
Note that $\overline {\partial} g(x^*)$ is by definition a subset of $X$ rather than $X^{**} \supseteq X$. The spaces $X$ and $X^{**}$ only match if they are reflexive, which will be an important property later. In an obvious manner, we may also define supergradients on $X$ and subgradients on $X^*$.

\subsection{Optimality conditions}

Sub- and superdifferentiation are precisely the tools needed to characterize optimality of convex and concave conjugations. The following results are easily established from the definition of
the sub- and superdifferentials:
\begin{lemma}\label{lemma:optimality}
If $f: X \to \mathbb R \cup \{\pm \infty \}$ proper convex and $g: X^* \to \mathbb R \cup \{\pm \infty \}$ proper concave then 
\begin{align*}
-x^* \in \underline\partial f(x) &\iff f^\wedge(x^*) = f(x) + \langle x^*, x \rangle, \\
x \in {\overline \partial} g(x^*) &\iff g^\vee(x) = g(x^*) - \langle x^*, x \rangle.
\end{align*}
\end{lemma}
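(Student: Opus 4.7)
The plan is to prove each equivalence by direct rearrangement of the defining inequalities; the only subtlety is to combine the resulting global inequality with a trivial bound obtained by evaluating the conjugate at the point itself.

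For the first equivalence, I would start from $-x^*\in\underline\partial f(x)$, which by Definition~\ref{def:diff}(iii) gives $f(x)-f(y)\le\langle -x^*,x-y\rangle$ for every $y\in X$. Rearranging this yields $f(x)+\langle x^*,x\rangle\le f(y)+\langle x^*,y\rangle$ for every $y$, so $f(x)+\langle x^*,x\rangle$ is a lower bound for the set whose infimum defines $f^\wedge(x^*)$. The reverse bound $f^\wedge(x^*)\le f(x)+\langle x^*,x\rangle$ follows trivially by inserting $y=x$ in the infimum, which together give the stated equality. For the converse, if $f^\wedge(x^*)=f(x)+\langle x^*,x\rangle$, then $f(x)+\langle x^*,x\rangle\le f(y)+\langle x^*,y\rangle$ holds for all $y\in X$, and this is precisely the subgradient inequality for $-x^*$.

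The second equivalence is entirely analogous: rewriting the superdifferential inequality $g(x^*)-g(y^*)\ge\langle x^*-y^*,x\rangle$ in the equivalent form $g(y^*)-\langle y^*,x\rangle\le g(x^*)-\langle x^*,x\rangle$ exhibits $g(x^*)-\langle x^*,x\rangle$ as an upper bound for the set whose supremum defines $g^\vee(x)$, while equality in the reverse direction follows by taking $y^*=x^*$. Reading the same chain backwards produces the reverse implication.

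There is no real obstacle here: both statements are essentially tautologies once one unpacks the definitions of $\underline\partial f$, $\overline\partial g$, $f^\wedge$, and $g^\vee$. Properness is only needed to ensure the infimum and supremum are not vacuous, and in fact convexity of $f$ (concavity of $g$) is not explicitly used in the manipulations — it enters only through the prior convention that $\underline\partial f$ and $\overline\partial g$ are defined for such functions. For this reason I expect the proof itself to occupy only a few lines.
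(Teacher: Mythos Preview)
Your proof is correct and follows essentially the same approach as the paper: both unpack the definition of the subdifferential, rearrange to the inequality $f(x)+\langle x^*,x\rangle\le f(y)+\langle x^*,y\rangle$ for all $y$, and combine this with the trivial bound from $y=x$ to obtain equality with $f^\wedge(x^*)$. The paper compresses this into a single chain of $\iff$ statements and handles the $g$ case by symmetry, exactly as you do.
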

\begin{proof}From the definitions of subdifferential and conjugate function, we have
\begin{align*}
-&x^* \in \underline\partial f(x) \\ &\iff \forall y \in X: f(y) + \langle x^*, y \rangle \geq f(x) + \langle x^*,x \rangle \\
 & \iff \inf \{ f(y) + \langle x^*, y \rangle \mid y \in X \} \geq f(x) + \langle x^*,x \rangle \\
 & \iff \inf \{ f(y) + \langle x^*, y \rangle \mid y \in X \} = f(x) + \langle x^*,x \rangle  \\ 
 & \iff f^\wedge(x^*) = f(x) + \langle x^*,x \rangle .
\end{align*}
The result for $g$ follows in a similar manner.
\end{proof}
For closed convex functions, we obtain by setting $g=f^\wedge$ in Lemma~\ref{lemma:optimality} and applying the biconjugation theorem:
\begin{lemma}\label{lemma:equivdiff}
\cite[Proposition~2.33]{Babru-Precupanu}
	For $f \in \Gamma_0(X)$ we have  
	\begin{align*}
	-x^* \in \underline\partial f(x) \iff x \in \overline \partial f^\wedge(x^*).
	\end{align*}
\end{lemma}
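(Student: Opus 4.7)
The plan is to reduce both sides of the claimed equivalence to the same scalar equality, namely $f^\wedge(x^*) = f(x) + \langle x^*,x\rangle$, by invoking Lemma~\ref{lemma:optimality} twice together with the Fenchel--Moreau biconjugation theorem.

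First, I would apply the first part of Lemma~\ref{lemma:optimality} directly to $f$, which, since $f \in \Gamma_0(X)$ is in particular proper convex, yields
\begin{align*}
-x^* \in \underline\partial f(x) \iff f^\wedge(x^*) = f(x) + \langle x^*,x\rangle.
\end{align*}
This handles the left-hand side of the equivalence.

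Next, set $g := f^\wedge$. Because $f \in \Gamma_0(X)$ is proper, one verifies that $g \in -\Gamma_0^*(X^*)$, so in particular $g$ is proper concave and the second part of Lemma~\ref{lemma:optimality} applies:
\begin{align*}
x \in \overline\partial g(x^*) \iff g^\vee(x) = g(x^*) - \langle x^*,x\rangle.
\end{align*}
Now invoke the Fenchel--Moreau biconjugation theorem: since $f \in \Gamma(X)$, we have $f = (f^\wedge)^\vee = g^\vee$. Substituting $g^\vee = f$ and $g = f^\wedge$ on the right-hand side turns this into
\begin{align*}
x \in \overline\partial f^\wedge(x^*) \iff f(x) = f^\wedge(x^*) - \langle x^*,x\rangle,
\end{align*}
which is the same scalar equality obtained in the first step. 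Chaining the two equivalences through this common identity delivers the lemma.

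The argument is essentially a bookkeeping exercise, so there is no real obstacle; the only point that needs slight care is checking that $g=f^\wedge$ qualifies as proper concave (so that the second half of Lemma~\ref{lemma:optimality} is legitimately applicable), which follows from $f \in \Gamma_0(X)$ together with the characterization $f^\wedge \in -\Gamma^*(X^*)$ stated immediately after the definition of the skew conjugate.
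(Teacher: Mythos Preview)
Your proposal is correct and follows exactly the route the paper indicates: the paper's entire proof is the sentence preceding the lemma, ``For closed convex functions, we obtain by setting $g=f^\wedge$ in Lemma~\ref{lemma:optimality} and applying the biconjugation theorem,'' which is precisely the two-step reduction you spell out. Your added remark verifying that $g=f^\wedge$ is proper concave is a useful bit of hygiene that the paper leaves implicit.
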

Finding a solution to $-x^* \in \underline \partial f(x)$ is therefore equivalent to finding a solution to $x \in \overline \partial f^\wedge(x^*)$.
This switch between primal and dual problems will be used several times in this work.

%

\subsection{Monotonicity of subdifferentials}
\label{sec:Mono}

\begin{lemma}\label{lemma:monotone}
	Let $f:X \to \mathbb R \cup \{+\infty \}$ be proper convex. Then $\underline\partial f$ is a monotone operator, that is,
for any $x,y \in X$, 
each pair $x^* \in \underline\partial f(x)$ and  $y^* \in \underline\partial f(y)$ satisfies
	\[
	\langle x^*-y^*,x-y \rangle \geq 0. \label{eq:mono} 
	\]
	If $f$ is strictly convex, then $\underline\partial f$ is strictly monotone, meaning that the inequality holds strictly for $x \neq y$.
	For concave functions, the inequality is reversed.
\end{lemma}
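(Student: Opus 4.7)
The plan is to obtain the monotonicity inequality directly from two instances of the defining subgradient inequality in Definition~\ref{def:diff}(iii). Given $x^*\in\underline\partial f(x)$ and $y^*\in\underline\partial f(y)$, the definition immediately gives
\[
f(x)-f(y)\le \langle x^*,x-y\rangle, \qquad f(y)-f(x)\le \langle y^*,y-x\rangle.
\]
Summing these two inequalities cancels $f(x)$ and $f(y)$ on the left and yields $0\le \langle x^*-y^*,x-y\rangle$, which is exactly the monotonicity claim. This is the whole of the first part; no additional structure on $X$ or on $f$ beyond properness and convexity enters.

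For strict monotonicity under strict convexity I would move to an interior convex combination $z_t=(1-t)x+ty$ with $t\in(0,1)$ and $x\neq y$, since the plain subgradient inequality at $y$ alone gives no room for strictness. Applying the two subgradient inequalities with $y$ replaced by $z_t$ produces lower bounds on $f(z_t)$; taking the $(1-t)$/$t$-weighted sum of those bounds gives
\[
f(z_t)\ge (1-t)f(x)+tf(y)+t(1-t)\langle x^*-y^*,y-x\rangle.
\]
Now strict convexity contributes the bound $f(z_t)<(1-t)f(x)+tf(y)$, which forces the remaining term on the right to be strictly negative. Dividing by $t(1-t)>0$ gives the strict inequality $\langle x^*-y^*,x-y\rangle>0$.

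The concave case is handled by reduction: supergradients of $g$ at $x^*$ in the sense of \eqref{eq:superdiff} coincide with the negatives of subgradients of $-g$ at $x^*$, so the convex result applied to $-g$ produces the reversed sign claimed in the lemma, with strict reversal whenever $g$ is strictly concave.

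No genuine obstacle is expected; the argument is a standard two-line manipulation. The only care needed is to use a genuine interior parameter $t\in(0,1)$ so that strict convexity can be invoked as a strict inequality, and to track the sign flip cleanly when transferring the statement from $\underline\partial$ to $\overline\partial$.
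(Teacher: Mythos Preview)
Your proof is correct. The first part (monotonicity) is identical to the paper's: write the two subgradient inequalities and add them.

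For the strict part, you take a slightly different route. The paper simply asserts that for strictly convex $f$ the subgradient inequalities themselves hold strictly when $x\neq y$, and then adds the two strict inequalities. You instead pass to an interior point $z_t=(1-t)x+ty$, combine the two (non-strict) subgradient bounds on $f(z_t)$, and let the strict-convexity inequality $f(z_t)<(1-t)f(x)+tf(y)$ supply the strictness. Your argument is a bit longer but more self-contained, since it invokes only the definition of strict convexity; the paper's one-line version relies on the (standard but unproved-in-the-paper) fact that subgradient inequalities are automatically strict for strictly convex functions. Your remark that ``the plain subgradient inequality at $y$ alone gives no room for strictness'' is slightly misleading---it does, once one knows that fact---but your alternative route is perfectly valid.
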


\begin{proof}
Let $x,y \in X$ be subdifferentiable points and select subgradients $x^* \in \underline\partial f(x),y^* \in \underline\partial f(y)$. Then
\begin{align*}
f(x)-f(y) &\leq \langle x^*, x - y \rangle, \\
f(y)-f(x) &\leq \langle y^*, y - x \rangle = \langle -y^*, x - y \rangle. 
\end{align*}
Adding these subgradient inequalities together, we obtain monotonicity.
If $f$ is strictly convex, then the subgradient inequalities hold strictly for $x \neq y$, yielding
strict monotonicity.
\end{proof}

The above lemma is a rigorous convex analytical version of a more
physically motivated lemma in Wagner et al.~\cite{Wagner2013} and is equivalent to the Hohenberg--Kohn theorem. To see that, assume that $E$ is
strictly concave (which is not true generally but will be the case for
the regularized version, see Remark~\ref{remark:strictly-concave}) and take $x \in
\overline\partial E(x^*)$, $y \in \overline\partial E(y^*)$. Then by 
Lemma~\ref{lemma:monotone} we get strict monotonicity of $\overline\partial E$, and with the roles of $x,y$ and $x^*,y^*$
interchanged, it follows from $x^* \neq y^*$ that $x \neq y$. 
This means that different potentials always lead to different states and that the corresponding mapping 
$x^* \mapsto x \in \overline\partial E(x^*)$ is
injective and thus invertible on the restricted codomain of
$v$-representable states. This is just the main statement of the
Hohenberg--Kohn theorem (including possible degeneracies), which 
is discussed again in the standard setting of DFT in Section~\ref{sec:standard-dft}.

\subsection{Moreau--Yosida regularization}
\label{sec:my}

To perform regularization of functionals we have to demand additional properties for $X,X^*$. Most importantly, the spaces have to be reflexive.  Note that $X$ reflexive implies $X^*$ reflexive and vice versa.
Additional conditions will include strict and (in Theorem {\ref{thm:258cropped}}) uniform convexity.
In a strictly convex space the connecting line segment between surface points of the unit ball lies strictly inside the ball, while in a uniformly convex space the distance of the middle point of the line segment from the surface is not only non-zero but also depends only on the length of the segment, not on the chosen points themselves. Obviously, a uniformly convex space is also strictly convex. (For further details see \mbox{\cite[Sections~1.2.3 and 1.2.4]{Babru-Precupanu}}.)\\
\indent It is interesting to note that uniform convexity of $X$ implies reflexivity (Milman--Pettis theorem).
By a theorem due to Asplund,  
if $X$ is assumed reflexive, an equivalent norm can be chosen such that $X$ and $X^*$ are strictly convex (yet not uniformly convex). In the following, we still keep the additional assumption of strict convexity in order to maintain the respective given norms of the Banach spaces at hand. Important uniformly convex spaces are the Lebesgue spaces $L^p$ with $1 < p < \infty$, but not $L^1$ or $L^\infty$, which are not even reflexive or strictly convex. Spaces that are reflexive but admit no equivalent norm that makes them uniformly convex exist \mbox{\cite{Day1941}}, but do not occur ``naturally''.\\
\indent The following simple functional will be the centerpiece of the regularization performed here.

\begin{definition} \label{def:MYreg} Set $\phi(\cdot) = \frac 1 2 \Vert \cdot \Vert^2$ on $X$ and $X^*$. Then for $\eps>0$ 
	the Moreau--Yosida regularization of a $f \in \Gamma_0(X)$ is defined as
	\begin{align*}
	f_\eps(x) = \inf \{ f(y) + \eps^{-1}\phi( x-y) \mid y\in X \}.
	\end{align*}
\end{definition}
\begin{remark}\label{remark:strict}
	The function $\phi$ is strictly convex if and only if the respective Banach space is strictly convex \cite[Proposition~1.103]{Babru-Precupanu}.
\end{remark}
\begin{remark}\label{remark:infconvolution}
	Alternatively we can define $f_\eps =f\,\square\, \eps^{-1}\phi$, where the box notation stands for the \emph{infimal convolution} of $f$ and $g$ and is given by 
	\[
	(f\, \square \,g)(x) = \inf\{ f(y) + g(x-y) \mid y\in X \}.
	\]
\end{remark}
A minimal value of $f$ (if it exists) is preserved at the same location when passing over to $f_\eps$.
The infimum in the definition above is always uniquely attained, $f_\eps \in \Gamma_0(X)$ and is everywhere finite on $X$, features for which reflexivity and strict convexity of $X,X^*$ is imperative \cite[Section 2.2.3]{Babru-Precupanu}.
This unique minimizer for any given $x$ gives rise to the definition of the proximal mapping
\begin{align*}
\prox_{\eps f}(x) &= \argmin\{ f(y) + \eps^{-1}\phi(x-y) \mid y\in X \} \\
&= \argmin\{ \eps f(y) + \phi(x-y) \mid y\in X \}.
\end{align*}

\begin{definition} \label{def:dualitymap}
	The duality map on $X$ is
	\begin{equation}\label{eq:dualitymap}
	J(x) = \{ x^* \in X^* \mid \|x^*\| = \|x\|, \langle x^*,x \rangle = \|x\|^2 \}
	\end{equation}
	and assigns to each state a set of dual elements. 
\end{definition}

The duality map is a bijective mapping $X \to X^*$ in the case of $X,X^*$ reflexive and strictly convex \cite[Proposition~1.117]{Babru-Precupanu}.
Of particular interest is the subdifferential of $\phi$ that yields just the duality map \cite[Example~2.32]{Babru-Precupanu}
\begin{align}\label{eq:phiSub}
\underline\partial \phi(x) = J(x).
\end{align}
Under the assumption that $X$ is reflexive, we can define $J^{-1}$ on all of $X^*$ \cite[Proposition~1.117(iv)]{Babru-Precupanu} and then have \eqref{eq:phiSub} also in the reverse direction
\begin{equation} \label{eq:phiSubJ}
\underline\partial \phi(x^*) = J^{-1}(x^*).
\end{equation}

\begin{lemma} \label{lemma:Hilt}
	If $X^*$ is strictly convex then $\phi$ is Gâteaux differentiable, if $X^*$ is uniformly convex then $\phi$ is Fréchet differentiable. In both cases the derivative is the duality map $J$.\footnote{Note that in the published version of this article Lemma~\ref{lemma:Hilt} is given as Proposition 8 and without proof, instead with a reference to \cite{Megginson1998}.}
\end{lemma}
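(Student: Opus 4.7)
The plan is to exploit the identity $\underline\partial \phi(x) = J(x)$ from \eqref{eq:phiSub} together with the standard equivalence that a continuous proper convex function is Gâteaux (resp.\ Fréchet) differentiable at a point precisely when its subdifferential there is single-valued (resp.\ single-valued and norm-to-norm upper semicontinuous), with the unique subgradient then being the derivative. Since $\phi$ is convex and finite everywhere on $X$, it is automatically continuous, so these equivalences apply without further hypotheses.

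For the Gâteaux statement the key step is to prove that strict convexity of $X^*$ forces $J(x)$ to be a singleton for every $x \in X$. The point $x = 0$ is trivial since $J(0) = \{0\}$ directly from \eqref{eq:dualitymap}. For $x \neq 0$, suppose $x_1^*, x_2^* \in J(x)$, so that $\|x_1^*\| = \|x_2^*\| = \|x\|$ and $\langle x_i^*, x\rangle = \|x\|^2$. The midpoint $x^* = \tfrac{1}{2}(x_1^* + x_2^*)$ then satisfies $\langle x^*, x\rangle = \|x\|^2 \leq \|x^*\|\,\|x\|$, whence $\|x^*\| = \|x\|$; strict convexity of $X^*$ rules out two distinct points of the sphere of radius $\|x\|$ whose midpoint lies on that same sphere, forcing $x_1^* = x_2^*$. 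Combined with $\underline\partial \phi(x) = J(x)$ and the single-valued subdifferential criterion, this yields Gâteaux differentiability with $\nabla \phi(x) = J(x)$.

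For the Fréchet half the cleanest route is via duality: by a classical theorem (Day--Lindenstrauss, essentially a sharpening of Šmulian's theorem) $X^*$ uniformly convex is equivalent to $X$ being uniformly smooth, which in turn is equivalent to the norm on $X$ being uniformly Fréchet differentiable on the unit sphere. A chain-rule argument applied to $\phi = \tfrac{1}{2}\|\cdot\|^2$ then gives Fréchet differentiability at every $x \neq 0$, with the derivative necessarily agreeing with the Gâteaux derivative $J(x)$ already identified. The origin requires only a trivial separate check: $\phi(h) - \phi(0) = \tfrac{1}{2}\|h\|^2 = o(\|h\|)$ as $h \to 0$, so $\phi$ is Fréchet differentiable at $0$ with derivative $0 = J(0)$.

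The main obstacle is invoking the right classical machinery at the appropriate level: the single-valuedness of $J$ under strict convexity of $X^*$ is elementary and disposes of the Gâteaux case directly, but the Fréchet half genuinely needs the uniform-convexity/uniform-smoothness duality or an equivalent Šmulian-type characterization. An alternative route avoiding that duality citation would be to verify directly that $J$ is norm-to-norm continuous whenever $X^*$ is uniformly convex, using the Kadec--Klee property (weakly convergent sequences with converging norms converge strongly) applied to the sequence $J(x_n)$ when $x_n \to x$, and then to deduce Fréchet differentiability from the Šmulian criterion for general continuous convex functions.
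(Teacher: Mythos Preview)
Your proposal is correct. The G\^{a}teaux half is exactly the paper's argument: it too uses $\underline\partial\phi = J$, the single-valuedness of $J$ under strict convexity of $X^*$ (the paper cites \cite[Proposition~1.117(iii)]{Babru-Precupanu} rather than spelling out the midpoint argument you give), and then the standard ``single-valued subdifferential of a continuous convex function implies G\^{a}teaux differentiability'' criterion \cite[Proposition~2.40]{Babru-Precupanu}.

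For the Fr\'{e}chet half the paper takes a slightly different route than your primary one: it invokes the fact that uniform convexity of $X^*$ makes $J$ (uniformly) norm-to-norm continuous \cite[Proposition~1.117(vi)]{Babru-Precupanu}, and then uses that a continuous G\^{a}teaux derivative upgrades to a Fr\'{e}chet derivative \cite[Lemma~34.3]{Blanchard-Bruening}. This is essentially the ``alternative route'' you sketch at the end via Kadec--Klee and \v{S}mulian, so you have in fact anticipated the paper's argument there. Your main route---passing through the duality $X^*$ uniformly convex $\Leftrightarrow$ $X$ uniformly smooth $\Leftrightarrow$ $\|\cdot\|$ uniformly Fr\'{e}chet differentiable on the sphere, then a chain rule for $\phi = \tfrac{1}{2}\|\cdot\|^2$---is a genuinely different and equally valid path. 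It has the merit of being more geometric and of yielding uniform Fr\'{e}chet differentiability on bounded sets for free; the paper's route is shorter given the cited black boxes and avoids treating $x=0$ separately.
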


\begin{proof}
That the general subdifferential of $\phi$ is identical to the duality map was already mentioned in \eqref{eq:phiSub}. If $X^*$ is strictly convex then by \cite[Proposition 1.117~(iii)]{Babru-Precupanu} this subdifferential is single-valued which already establishes Gâteaux differentiability \cite[Proposition 2.40]{Babru-Precupanu}. If furthermore $X^*$ is uniformly convex then by \cite[Proposition 1.117~(vi)]{Babru-Precupanu} the derivative is also (uniformly) continuous which implies Fréchet differentiability \cite[Lemma 34.3]{Blanchard-Bruening}.
\end{proof}

If we recall that the Moreau--Yosida regularization of a functional corresponds to the infimal convolution with $\phi$ (Remark~\ref{remark:infconvolution}), we can expect that the regularity properties of $\phi$ that follow from Lemma~\ref{lemma:Hilt} are taken over to the regularized functional $f_\eps = f \Box \eps^{-1}\phi$. This means $f_\eps$ should be Gâteaux differentiable if $X^*$ is strictly convex and Fréchet differentiable if $X^*$ is uniformly convex, statements that get proven by the following theorem.

\begin{theorem}\label{thm:258cropped}
	Suppose $f\in \Gamma_0(X)$ and that $X,X^*$ are reflexive and strictly convex. Then $f_\eps$ is Gâteaux differentiable on $X$ and
	\[
	\nabla f_\eps(x)= \eps^{-1}J(x - \prox_{\eps f}(x)).
	\]
	If $X^*$ is uniformly convex, then $f_\eps$ is even Fréchet differentiable.
\end{theorem}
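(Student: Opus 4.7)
The plan is to exploit the infimal convolution representation $f_\eps = f\,\square\,\eps^{-1}\phi$ from Remark~\ref{remark:infconvolution}, combined with the regularity of $\phi$ supplied by Lemma~\ref{lemma:Hilt}. Fix $x \in X$ and set $y := \prox_{\eps f}(x)$, $z := x - y$. Under the standing assumptions of reflexivity and strict convexity of $X$ and $X^*$, Section~\ref{sec:my} already ensures that $y$ is the unique minimizer of $u \mapsto f(u) + \eps^{-1}\phi(x-u)$ and that $f_\eps$ is proper, convex and finite, hence continuous, on all of $X$. Fermat's rule applied to this strictly convex sum yields the optimality condition $0 \in \underline{\partial}f(y) - \eps^{-1}\underline{\partial}\phi(z)$, which by \eqref{eq:phiSub} is $\eps^{-1}J(z) \in \underline{\partial}f(y)$.

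Next I would invoke the exact infimal-convolution subdifferential rule: whenever $(f\,\square\,g)(x)$ is attained at a decomposition $x=y+z$, one has $\underline{\partial}(f\,\square\,g)(x) = \underline{\partial}f(y) \cap \underline{\partial}g(z)$. Taking $g=\eps^{-1}\phi$ gives $\underline{\partial}f_\eps(x) = \underline{\partial}f(y) \cap \{\eps^{-1}J(z)\}$, since by Lemma~\ref{lemma:Hilt} the strict convexity of $X^*$ forces $\underline{\partial}\phi(z)$ to be the singleton $\{J(z)\}$. Combined with the optimality observation above, this identifies $\underline{\partial}f_\eps(x) = \{\eps^{-1}J(x - \prox_{\eps f}(x))\}$, and in particular the subdifferential is single-valued everywhere. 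Applying the standard principle that a continuous convex function whose subdifferential at $x$ is a singleton is Gâteaux differentiable there with derivative equal to that unique subgradient then delivers the asserted formula $\nabla f_\eps(x) = \eps^{-1}J(x - \prox_{\eps f}(x))$.

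For the Fréchet claim under uniform convexity of $X^*$, the cleanest route is to show that the derivative map $\nabla f_\eps: X \to X^*$ is norm-to-norm continuous and then apply the classical fact that a continuous convex function that is Gâteaux differentiable with a continuous derivative is automatically Fréchet differentiable. The continuity of $\nabla f_\eps$ factors as $x \mapsto (x,\prox_{\eps f}(x)) \mapsto x - \prox_{\eps f}(x) \mapsto \eps^{-1}J(x-\prox_{\eps f}(x))$: the proximal map is continuous on reflexive strictly convex spaces (cf.~\cite[Section~2.2.3]{Babru-Precupanu}), and $J$ is norm-to-norm continuous when $X^*$ is uniformly convex (this is precisely the ingredient used in the proof of Lemma~\ref{lemma:Hilt}, via \cite[Proposition~1.117(vi)]{Babru-Precupanu}). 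An alternative, more hands-on approach is a direct sandwich: the inequality $f_\eps(x+h) - f_\eps(x) \le \eps^{-1}[\phi(x+h-y) - \phi(x-y)]$ with $y=\prox_{\eps f}(x)$, together with the reverse inequality obtained by evaluating at $\prox_{\eps f}(x+h)$, lets one transfer the Fréchet expansion of $\phi$ through to $f_\eps$ using continuity of the proximal map.

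The main technical obstacle I anticipate is rigorously justifying the exact subdifferential formula for the infimal convolution at the attained decomposition; this is standard in convex analysis but requires exactness of the infimum, which is guaranteed here precisely by the unique existence of $\prox_{\eps f}(x)$. A secondary subtlety lies in the Gâteaux-to-Fréchet upgrade: one must ensure that the local continuity theorem for gradients of convex functions is available in the Banach (not only Hilbert) setting, which is why uniform — rather than merely strict — convexity of $X^*$ is indispensable in the second half of the statement.
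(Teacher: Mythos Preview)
Your argument is correct and, for the G\^ateaux part, essentially unpacks what the paper obtains by citation: the paper simply invokes \cite[Theorem~2.58]{Babru-Precupanu} for differentiability and the derivative formula, whereas you derive it from the exact infimal-convolution subdifferential identity $\underline{\partial}(f\,\square\,g)(x) = \underline{\partial}f(y)\cap\underline{\partial}g(z)$ together with the single-valuedness of $\underline{\partial}\phi$ from Lemma~\ref{lemma:Hilt}. This is the same mathematics made explicit, and your route has the advantage of being self-contained within the paper's own framework.

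For the Fr\'echet upgrade both you and the paper use the same overarching principle --- continuity of $\nabla f_\eps$ implies Fr\'echet differentiability --- but the paper cites \cite[Proposition~1.146(i)]{Babru-Precupanu} directly for that continuity, while you factor $\nabla f_\eps$ through $\prox_{\eps f}$ and $J$. Here your claim that ``the proximal map is continuous on reflexive strictly convex spaces'' is stated more generally than what is actually needed or readily available: norm-to-norm continuity of $\prox_{\eps f}$ is not immediate under mere strict convexity, and \cite[Section~2.2.3]{Babru-Precupanu} does not assert it in that generality. Since you are already assuming $X^*$ uniformly convex in this part, the cleanest fix is to do exactly what the paper does --- invoke \cite[Proposition~1.146(i)]{Babru-Precupanu} for the continuity of $\nabla f_\eps$ directly --- rather than routing through the proximal map. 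Your sandwich alternative runs into the same issue, since controlling the lower bound requires $y_h \to y$, i.e., continuity of $\prox_{\eps f}$ again.
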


\begin{proof}
	The proof of the first part can be found in \cite{Babru-Precupanu}, Theorem 2.58, with the derivative given by (2.57) there. That the derivative can indeed be evaluated by application of the proximal mapping is shown in the proof of Theorem 2.58. Proposition 1.146 (i) in \cite{Babru-Precupanu} further states that if $X^*$ is uniformly convex then $\nabla f_\eps : X \rightarrow X^*$ is continuous, which implies Fréchet differentiability \cite[Lemma 34.3]{Blanchard-Bruening}.
\end{proof}

We now turn our attention to the convex conjugate pair $E= F^\wedge$ from \eqref{eq:Edef} and \eqref{eq:Fdef} and their regularized versions
\begin{align*}
F_\eps(x) &= \inf\{ F(y) + \eps^{-1}\phi(x-y) \mid y\in X \},\\
E_\eps(x^*) &= (F_\eps)^\wedge(x^*).
\end{align*}
Note especially that $E_\eps$ is not the regularization of $E$ but rather the transformation of the regularized $F_\eps$.

\begin{theorem} \label{thm:Ereg}
	Suppose $F\in \Gamma_0(X)$ and $X$ reflexive then
	\begin{align}
	E_\eps(x^*) &= E(x^*) - \eps \phi( x^* ) \quad\mbox{and} \label{eq:Ereg1} \\
	\overline\partial E_\eps(x^*) &= \overline\partial E(x^*) - \eps J^{-1}(x^*). \label{eq:Ereg2}
	\end{align}
\end{theorem}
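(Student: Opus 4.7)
My plan has two main stages, one for each equation. The key observation throughout is that $F_\eps = F\,\square\,\eps^{-1}\phi$ (Remark~\ref{remark:infconvolution}), so the skew conjugate $E_\eps = (F_\eps)^\wedge$ decomposes nicely.

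First I establish \eqref{eq:Ereg1} by showing a general identity: for proper convex $f,g$,
\[
(f\,\square\,g)^\wedge = f^\wedge + g^\wedge.
\]
This follows immediately by writing out both infima, substituting $z=x-y$, and separating:
\begin{align*}
(f\,\square\,g)^\wedge(x^*) &= \inf_{x,y}\{f(y) + g(x-y) + \langle x^*,x\rangle\} \\
&= \inf_y\{f(y)+\langle x^*,y\rangle\} + \inf_z\{g(z)+\langle x^*,z\rangle\}.
\end{align*}
Applied here, $E_\eps(x^*) = E(x^*) + (\eps^{-1}\phi)^\wedge(x^*)$. It remains to compute $(\eps^{-1}\phi)^\wedge$. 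Since $\langle x^*,z\rangle \geq -\|x^*\|\,\|z\|$,
\[
\frac{\|z\|^2}{2\eps} + \langle x^*,z\rangle \geq \inf_{t\geq 0}\Bigl\{\frac{t^2}{2\eps}-\|x^*\|t\Bigr\} = -\frac{\eps\|x^*\|^2}{2} = -\eps\phi(x^*).
\]
Sharpness comes from the definition of the dual norm: pick $z_\delta\in X$ with $\|z_\delta\|=1$ and $\langle x^*,z_\delta\rangle < -\|x^*\|+\delta$, then $z = \eps\|x^*\|z_\delta$ drives the expression to $-\eps\phi(x^*)$ as $\delta\to 0$. Thus $(\eps^{-1}\phi)^\wedge(x^*) = -\eps\phi(x^*)$, yielding \eqref{eq:Ereg1}.

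For \eqref{eq:Ereg2} I plan to apply the Moreau--Rockafellar sum rule to the concave decomposition $E_\eps = E + (-\eps\phi)$ on $X^*$. The function $\phi(x^*) = \tfrac{1}{2}\|x^*\|^2$ is finite and (norm-)continuous everywhere on $X^*$, so the standard qualification condition (continuity of one summand at a point of the effective domain of the other) is automatic. The sum rule gives
\[
\overline\partial E_\eps(x^*) = \overline\partial E(x^*) + \overline\partial(-\eps\phi)(x^*) = \overline\partial E(x^*) - \eps\,\underline\partial \phi(x^*),
\]
and \eqref{eq:phiSubJ}, which needs only reflexivity of $X$, identifies $\underline\partial\phi(x^*) = J^{-1}(x^*)$, giving the claim.

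The main obstacle is the identity $(\eps^{-1}\phi)^\wedge = -\eps\phi$ in a general Banach space, because the familiar Hilbert-space argument via completing the square is unavailable. The substitute is the two-sided squeeze through the dual-norm bound above: the lower bound is elementary, and sharpness is secured by the very definition of $\|x^*\|$ as a supremum (no Hahn--Banach needed). Reflexivity is used only at the final step through \eqref{eq:phiSubJ}; strict or uniform convexity is not required here since we work with the (multi-valued) subdifferential rather than a derivative.
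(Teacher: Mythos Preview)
Your proof is correct and follows essentially the same route as the paper: both obtain \eqref{eq:Ereg1} by splitting the conjugate of the infimal convolution as $F^\wedge + (\eps^{-1}\phi)^\wedge$ via the substitution $z=x-y$, then identify $(\eps^{-1}\phi)^\wedge(x^*)=-\eps\phi(x^*)$, and finally take superdifferentials to get \eqref{eq:Ereg2} using \eqref{eq:phiSubJ}. The only differences are in exposition: the paper asserts $\phi^\wedge=-\phi$ and reaches the $\eps$-version by the scaling rule $(\lambda f)^\wedge(x^*)=\lambda f^\wedge(x^*/\lambda)$, while you verify it directly via the dual-norm squeeze; and the paper simply says \eqref{eq:Ereg2} ``follows directly from forming the superdifferential'', whereas you correctly make explicit that equality (not just the trivial inclusion $\overline\partial E + \overline\partial(-\eps\phi)\subseteq\overline\partial E_\eps$) requires the Moreau--Rockafellar sum rule, whose qualification condition is met because $\phi$ is continuous everywhere.
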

\begin{proof}
	To prove \eqref{eq:Ereg1}, note that by definition
	\begin{align*}
	E_\eps(x^*) =& (F_\eps)^\wedge(x^*) \\
	=& \inf\{ F(y) + \eps^{-1}\phi(x-y) + \langle x^*,x \rangle \mid x,y \in X \} \\
	=& \inf\{ F(y) + \langle x^*,y \rangle \\
	&+ \eps^{-1}\phi(x-y) + \langle x^*,x-y \rangle \mid x,y \in X \} \\
	=& \inf\{ F(y) + \langle x^*,y \rangle \mid y \in X\} \\
	&+ \inf\{ \eps^{-1}\phi(z) + \langle x^*,z \rangle \mid z \in X \} \\
	=& F^\wedge(x^*) + (\eps^{-1}\phi)^\wedge(x^*).
	\end{align*}
	From $\phi(x) =  \Vert x \Vert^2/2$ it follows that  $\phi^\wedge(x^*) = -\Vert x^* \Vert^2/2$. Furthermore, the scaling relation $(\lambda f)^\wedge(x^*) = \lambda f^\wedge(x^*/\lambda)$, $\lambda>0$, gives $(\eps^{-1}\phi)^\wedge(x^*)=-\eps \Vert x^* \Vert^2/2$. Thus
	\[
	(\eps^{-1}\phi)^\wedge(x^*) = -\eps \phi(x^*)  
	\]
	and we can conclude that \eqref{eq:Ereg1} holds.
%
  Eq.~\eqref{eq:Ereg2} follows directly from forming the superdifferential of \eqref{eq:Ereg1} and inserting \eqref{eq:phiSubJ}.
\end{proof}

\begin{remark}\label{remark:strictly-concave}
	Eq.~\eqref{eq:Ereg1} shows that $E_\eps$ is \emph{strictly} concave as the sum of a concave function and the strictly concave $-\eps \phi$ (if $X^*$ is strictly convex). From strict concavity follows that $E_\eps(x^*) - \langle x^*,x \rangle$ attains its maximum at one point only, so the regularized version of \eqref{eq:Fdef} has a unique maximizer. 
        The subdifferential of the conjugate $F_\eps = (E_\eps)^\vee$, which gives this maximizer just like in \eqref{eq:varprob1}, is thus a singleton. 
	That the subdifferential $\underline\partial F_\eps$ gives a singleton is
	naturally true in the case of Gâteaux differentiability as in Theorem
	\ref{thm:258cropped}.
	(See \cite{Babru-Precupanu} just above Proposition~2.33 and also note Proposition~2.40 in this context.) 
	Strict concavity of the
	energy functional also connects to strict monotonicity of its
	superdifferential and the Hohenberg--Kohn theorem, see Section~\ref{sec:Mono}.
\end{remark}

\begin{corollary}\label{cor:proximal}
	Let $X, X^*$ be reflexive and strictly convex. Any solution of the regularized problem $x \in \overline\partial E_\eps(x^*)$ is connected to a ``physical'' solution of the unregularized problem by the proximal mapping
	\[
	\prox_{\eps F}(x) \in \overline\partial E(x^*) \subset \tilde X.
	\]
\end{corollary}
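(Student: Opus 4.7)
The plan is to translate the regularized dual condition $x \in \overline\partial E_\eps(x^*)$ into a primal subdifferential condition on $F_\eps$, use the explicit form of $\nabla F_\eps$ from Theorem~\ref{thm:258cropped}, and then invoke the first-order optimality of the proximal minimizer to land back in $\underline\partial F(\prox_{\eps F}(x))$.

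First I would apply Lemma~\ref{lemma:equivdiff} to the pair $F_\eps \in \Gamma_0(X)$ and $E_\eps = (F_\eps)^\wedge$, converting $x \in \overline\partial E_\eps(x^*)$ into $-x^* \in \underline\partial F_\eps(x)$. Under the present reflexivity and strict convexity hypotheses, Theorem~\ref{thm:258cropped} says $F_\eps$ is Gâteaux differentiable with $\nabla F_\eps(x) = \eps^{-1} J(x - \prox_{\eps F}(x))$, so the singleton subdifferential pins down
\[
-x^* \;=\; \eps^{-1}\, J\bigl(x - \prox_{\eps F}(x)\bigr).
\]

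Next, to reconnect with $F$ itself, I would write down Fermat's rule at the minimizer $p := \prox_{\eps F}(x)$ of $y \mapsto F(y) + \eps^{-1}\phi(x - y)$. Because $\eps^{-1}\phi(x - \cdot)$ is everywhere finite and continuous on $X$, the convex subdifferential sum rule (Moreau--Rockafellar) applies without further assumptions, and combined with $\underline\partial \phi = J$ from \eqref{eq:phiSub} (and the sign flip from the chain rule applied to $y \mapsto x - y$), this yields $\eps^{-1} J(x - p) \in \underline\partial F(p)$. Comparing with the displayed equation above gives $-x^* \in \underline\partial F(p)$, and one more application of Lemma~\ref{lemma:equivdiff} translates this into $p \in \overline\partial F^\wedge(x^*) = \overline\partial E(x^*)$. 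Finally, $-x^* \in \underline\partial F(p)$ forces $p \in \dom F$, and since $F \equiv +\infty$ on $X \setminus \tilde X$ this places $p \in \tilde X$, delivering both halves of the claim.

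The only mildly delicate point is invoking the sum rule in the proximal optimality condition, but continuity of $\eps^{-1}\phi(x - \cdot)$ on all of $X$ makes this entirely standard; the remainder is a direct chaining of results already established in the excerpt. As a sanity check one can derive the corollary equivalently from Theorem~\ref{thm:Ereg}: any $x \in \overline\partial E_\eps(x^*)$ decomposes as $x = y - \eps J^{-1}(x^*)$ with $y \in \overline\partial E(x^*)$, and the displayed identity from the first step identifies this $y$ with $\prox_{\eps F}(x) = x + \eps J^{-1}(x^*)$, giving the same conclusion by a slightly different packaging.
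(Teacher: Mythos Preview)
Your proof is correct. The first two steps (converting to $-x^* \in \underline\partial F_\eps(x)$ via Lemma~\ref{lemma:equivdiff} and then invoking the explicit gradient formula from Theorem~\ref{thm:258cropped}) coincide exactly with the paper's proof. The difference lies in how you close the argument: the paper applies $J^{-1}$ to obtain $x - \prox_{\eps F}(x) = -\eps J^{-1}(x^*)$ and then reads off the conclusion directly from \eqref{eq:Ereg2} of Theorem~\ref{thm:Ereg}, whereas your main route bypasses Theorem~\ref{thm:Ereg} entirely by writing down the first-order optimality condition for the proximal minimizer (Fermat plus the Moreau--Rockafellar sum rule, legitimate here because $\eps^{-1}\phi(x-\cdot)$ is everywhere continuous) to obtain $-x^* \in \underline\partial F(p)$, and then converting back via Lemma~\ref{lemma:equivdiff}. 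Your approach is slightly more self-contained since it does not rely on the separately proved superdifferential identity, while the paper's route is the natural payoff of having just established Theorem~\ref{thm:Ereg}. Amusingly, your ``sanity check'' paragraph \emph{is} the paper's argument, so you have in fact presented both proofs.
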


\begin{proof}
	Take $x \in \overline\partial E_\eps(x^*)$, which is equivalent to $-x^* \in \underline\partial F_\eps(x)$ due to Lemma~\ref{lemma:equivdiff}. This subdifferential is even a singleton by Theorem~\ref{thm:258cropped} and it holds that
	\[
	-x^* = \eps^{-1} J(x - \prox_{\eps F}(x)).
	\]
	Since the duality map is bijective we apply $J^{-1}$ to get
	\[
	-\eps J^{-1}(x^*) = x - \prox_{\eps F}(x).
	\]
	We have here also used the fact that the duality map is always a homogeneous function. Now, comparing this with \eqref{eq:Ereg2} in Theorem~\ref{thm:Ereg}, we conclude from $x \in \overline\partial E_\eps(x^*)$ that $\prox_{\eps F}(x) \in \overline\partial E(x^*)$.
	It finally follows from the definition of $E$ in \eqref{eq:EdefF} and $F(x) = +\infty$ whenever $x \in X\setminus\tilde X$ that this solution is always in $\tilde X$.
\end{proof}

\section{Regularized Kohn--Sham iteration}

\subsection{Basic setup}

Assume that $x^*\in X^*$ has been given and we want to obtain the ground-state energy $E(x^*)$ from \eqref{eq:Edef} and a corresponding minimizer $x$ from \eqref{eq:argminenergy}, meaning that we must satisfy the equivalent conditions 
\begin{equation*}
-x^* \in \underline\partial F(x) \iff x \in\overline \partial E(x^*).
\end{equation*}
Transforming to the regularized energy functional with the help of Theorem~\ref{thm:Ereg}, we obtain
\begin{align*}
x \in \overline\partial E(x^*) &= \overline\partial E_\eps(x^*) + \eps J^{-1}(x^*),\\
E(x^*) &= E_\eps(x^*) + \eps\phi(x^*).
\end{align*}

Parallel to that we assume the existence of a reference functional $\tilde F^0 : \tilde X \rightarrow \mathbb{R}$ that belongs to the Kohn--Sham system and captures parts of the system's internal physics and leads to a variational problem that is supposedly easier to solve. Just like with $\tilde F$ we derive the regularized functionals $F_\eps^0$ and $E_\eps^0$ and set up the reference problem in analogous fashion
\begin{align*}
-\xks^* \in \underline\partial F_\eps^0(x) &\iff x \in \overline\partial E_\eps^0(\xks^*), \\
-x^* \in \underline\partial F_\eps(x) &\iff x \in \overline\partial E_\eps(x^*).
\end{align*}
Note that the minimizer state $x$ is the same in both cases, which makes it necessary to choose a different and at this stage undetermined potential for the reference system, the Kohn--Sham potential $\xks^*$. The $v$-representability problem that strikes the standard Kohn--Sham construction at this point 
(not being able to choose a potential such that the \emph{same} state is the solution) plays no role in the regularized version. By differentiability of $F_\eps$ and $F_\eps^0$, achieved through regularization, such a potential always exists as the (negative) derivative of the regularized functionals
\begin{align*}
\xks^* = - \nabla F_\eps^0 (x), \quad   x^* = - \nabla F_\eps (x).
\end{align*}
Subtraction of those two equations yields the first step in a self-consistent iteration scheme that begins with a density $x_1$ as a guess to the minimizer $x$ and eventually converges to $\xks^*$. One sensible initial guess used in Theorem~\ref{th:ks} later is $x_1 \in \overline\partial E_\eps^0(x^*)$, which means just taking the fixed external potential as a rough first approximation to the Kohn--Sham potential also capturing internal effects. The next iteration towards $\xks^*$ is then given by
\begin{equation}\label{eq:ks1}
x^*_{i+1} = x^* + \nabla F_\eps (x_i) - \nabla F_\eps^0 (x_i).
\end{equation}

The expression $\nabla F_\eps - \nabla F_\eps^0$, although in practice not known explicitly, is at least accessible through approximations since major contributions are expected to cancel due to similar physical effects in both systems. In DFT, this potential is known by the name ``Hartree exchange--correlation'' and subsumes all interaction effects that are lost in the non-interacting reference system. 
The second part of the Kohn--Sham iteration is then the solution of the (simple or simpler) reference system
\begin{align*}
x_{i+1} \in \overline\partial E_\eps^0(x^*_{i+1}).
\end{align*}
 
The stopping condition is $x^*_{i+1} = - \nabla F_\eps^0 (x_i)$ because this gives $x^* = -\nabla F_\eps (x_i)$ from \eqref{eq:ks1}, which means $x_i$ is the state yielding minimal (regularized) energy and consequently $x^*_{i+1} = \xks^*$. Note that this Kohn--Sham potential belongs to the regularized reference system and the resulting state is ``unphysical'' and in general $x \notin \tilde X$. 
%
%
%
Effectively we introduced two transformational layers to the problem,
firstly the reference system that gets connected to the sought-after
solution with the whole Kohn--Sham procedure, and secondly the
regularization. Invoking Theorem~\ref{thm:Ereg} and Corollary~\ref{cor:proximal}, it is possible to
translate both the ground-state energy and the corresponding state back to the original unregularized layer.

The important question of convergence of the Kohn--Sham scheme remains unaddressed up to this point. We shall see that, to guarantee at least weak convergence, the iteration must be slightly changed and not the full step from $x_i$ to $x_{i+1}$ is to be taken (following the ODA of the extended KS scheme in \cite{Cances2001}).

\subsection{Kohn--Sham iteration scheme}
\label{sec:ks}

We can now formulate the iteration scheme and prove a weak form of convergence in terms of the energy.

\begin{theorem}\label{th:ks}
	Let $X, X^*$ be reflexive and strictly convex, $E^0$ finite everywhere,
	$x^*\in X^*$ fixed, and set $x_1^*= x^*$ and select $x_1\in \overline\partial E_\eps^0(x^*)$. 
	Iterate $i=1,2,\dots$ according to:
	\begin{enumerate}[(a)]
		\item Set $x_{i+1}^* = x^* + \nabla F_\eps(x_i) - \nabla F_\eps^0(x_i)$ and stop if $x_{i+1}^* = -\nabla F_\eps^0(x_i) = \xks^*$.
		\item Select $ x_{i+1}' \in\overline\partial E_\eps^0(x_{i+1}^*)$.
		\item Choose $t_i\in (0,1]$ such that for
		$x_{i+1} = x_i + t_i( x_{i+1}' - x_i)$ one still has
		$$\langle \nabla F_\eps(x_{i+1}) +x^*,  x_{i+1}' - x_i \rangle \leq 0.$$
	\end{enumerate}
	Then the strictly descending sequence $\{ F_\eps(x_i) + \langle x^*,x_i\rangle \}_i$ converges as a sequence of real numbers to
	\begin{equation*}
	e_\eps(x^*) = \inf_i \{  F_\eps(x_i) + \langle x^*,x_i\rangle \} \geq E_\eps(x^*).
	\end{equation*}
	Thus, $e_\eps(x^*) + \eps\phi(x^*)$ is an upper bound for the ground-state energy $E(x^*)$.
\end{theorem}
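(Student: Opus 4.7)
The plan is to view the real sequence $a_i := F_\eps(x_i) + \langle x^*, x_i\rangle$ as monotone decreasing and bounded below, so that its convergence is automatic and its limit coincides with the infimum $e_\eps(x^*)$. Boundedness below is immediate from the very definition of the skew conjugate, since
\[
a_i \geq \inf\{F_\eps(y) + \langle x^*, y\rangle \mid y\in X\} = F_\eps^\wedge(x^*) = E_\eps(x^*).
\]
The final assertion on $E(x^*)$ then follows by adding $\eps\phi(x^*)$ to both sides of $e_\eps(x^*) \geq E_\eps(x^*)$ and invoking $E = E_\eps + \eps\phi$ from Theorem~\ref{thm:Ereg}. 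The real work is therefore to check that step (c) can be carried out with a strictly positive $t_i$ and that $a_{i+1} < a_i$ whenever the stopping criterion in (a) is not met.

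The key ingredient is the descent-direction inequality
\[
\langle \nabla F_\eps(x_i) + x^*,\, x_{i+1}' - x_i \rangle \leq 0.
\]
I would derive it from step (a) as follows: Gâteaux differentiability of $F_\eps^0$ (Theorem~\ref{thm:258cropped}) together with Lemma~\ref{lemma:equivdiff} turns the selection $x_{i+1}' \in \overline\partial E_\eps^0(x_{i+1}^*)$ into the equality $\nabla F_\eps^0(x_{i+1}') = -x_{i+1}^*$, which upon substituting the formula for $x_{i+1}^*$ rearranges to
\[
\nabla F_\eps^0(x_{i+1}') - \nabla F_\eps^0(x_i) = -\bigl(x^* + \nabla F_\eps(x_i)\bigr).
\]
Pairing with $x_{i+1}' - x_i$ and invoking monotonicity of $\nabla F_\eps^0$ (Lemma~\ref{lemma:monotone} applied to the convex $F_\eps^0$) delivers the inequality. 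A direct manipulation also shows that the stopping condition $x_{i+1}^* = -\nabla F_\eps^0(x_i)$ is equivalent to $x^* + \nabla F_\eps(x_i) = 0$, which by Lemma~\ref{lemma:equivdiff} means that $x_i$ already solves the regularized problem and attains $E_\eps(x^*)$.

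For step (c), I would study the scalar function $\varphi(t) := \langle \nabla F_\eps(x_i + t(x_{i+1}' - x_i)) + x^*,\, x_{i+1}' - x_i\rangle$, which is non-decreasing on $[0,1]$ since the restriction of $F_\eps$ to a line segment is convex and differentiable, and which satisfies $\varphi(0) \leq 0$ by the previous step. Hence $\{t \in (0,1] : \varphi(t) \leq 0\}$ is a non-degenerate interval from which an admissible $t_i$ can be selected. Descent then follows from the subgradient inequality for the convex $F_\eps$ at $x_{i+1}$:
\[
F_\eps(x_{i+1}) - F_\eps(x_i) \leq \langle \nabla F_\eps(x_{i+1}), x_{i+1} - x_i\rangle = t_i \langle \nabla F_\eps(x_{i+1}), x_{i+1}' - x_i\rangle,
\]
so that $a_{i+1} - a_i \leq t_i \varphi(t_i) \leq 0$. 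To upgrade this to strict descent outside the stopping case, I would examine the auxiliary convex function $g(t) := F_\eps(x_i + t(x_{i+1}' - x_i)) + \langle x^*, x_i + t(x_{i+1}' - x_i)\rangle$: from $g'_+(0) = \varphi(0) < 0$ together with $g$ being non-increasing on $[0, t_i]$ (where $\varphi \leq 0$), one concludes $g(t_i) < g(0)$, i.e., $a_{i+1} < a_i$.

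The main obstacle is securing $\varphi(0) < 0$ strictly (and with it a strictly positive $t_i$): monotonicity of $\nabla F_\eps^0$ only yields $\varphi(0) \leq 0$, and promoting this to a strict inequality requires strict monotonicity of $\nabla F_\eps^0$ combined with $x_{i+1}' \neq x_i$, the latter being precisely the negation of the stopping condition once $\nabla F_\eps^0$ is injective. That these properties are at all available is due to the Moreau--Yosida regularization making $F_\eps^0$ differentiable and $E_\eps^0$ strictly concave (cf.\ Remark~\ref{remark:strictly-concave}); as flagged in Remark~\ref{remark:wagner}, bypassing the regularization is precisely what opened the gap in~\cite{Wagner2013}.
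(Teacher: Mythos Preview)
Your overall architecture matches the paper's: show the directional derivative of $F_\eps+x^*$ at $x_i$ along $x_{i+1}'-x_i$ is strictly negative outside the stopping case, conclude that step~(c) produces strict descent, and use boundedness below by $E_\eps(x^*)$. Your treatment of step~(c) via the scalar function $\varphi$ is in fact more explicit than the paper's. There are, however, two genuine gaps.

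First, you never use the hypothesis that $E^0$ is finite everywhere, and accordingly you never verify that the selections $x_1\in\overline\partial E_\eps^0(x^*)$ and $x_{i+1}'\in\overline\partial E_\eps^0(x_{i+1}^*)$ are possible. The paper spends its opening paragraph on exactly this: finiteness of $E^0$ on all of $X^*$ together with its closed concavity and reflexivity of $X$ forces $\dom(\overline\partial E^0)=X^*$, whence $\overline\partial E_\eps^0$ is non-empty everywhere via Theorem~\ref{thm:Ereg}.

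Second, and more substantively, your route to the \emph{strict} inequality $\varphi(0)<0$ is placed on the wrong side of the duality. You invoke ``strict monotonicity of $\nabla F_\eps^0$'' and ``injectivity of $\nabla F_\eps^0$'', both of which amount to strict convexity of $F_\eps^0$. Moreau--Yosida regularization does \emph{not} deliver this: it makes $F_\eps^0$ differentiable and, on the dual side, $E_\eps^0$ strictly concave (Remark~\ref{remark:strictly-concave}), but $F_\eps^0$ can remain merely convex. The paper therefore applies Lemma~\ref{lemma:monotone} to the strictly concave $E_\eps^0$ at the two \emph{potentials} $x_{i+1}^*$ and $-\nabla F_\eps^0(x_i)$, which are distinct precisely when the stopping condition fails; since $x_{i+1}'\in\overline\partial E_\eps^0(x_{i+1}^*)$ and $x_i\in\overline\partial E_\eps^0(-\nabla F_\eps^0(x_i))$, strict monotonicity of $\overline\partial E_\eps^0$ yields $\langle x_{i+1}^*+\nabla F_\eps^0(x_i),\,x_{i+1}'-x_i\rangle<0$ directly. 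Your pairing is the same number, so the fix is simply to replace the appeal to $\nabla F_\eps^0$ on states by the appeal to $\overline\partial E_\eps^0$ on potentials. (Incidentally, you do not need injectivity of $\nabla F_\eps^0$ to get $x_{i+1}'\neq x_i$: if $x_{i+1}'=x_i$ then $\nabla F_\eps^0(x_i)=\nabla F_\eps^0(x_{i+1}')=-x_{i+1}^*$, which is already the stopping condition.)
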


\begin{proof}
First, we demonstrate that the superdifferential of $E^0_\eps$ is always non-empty such that step (b) can be performed (as well as the initial selection of $x_1$).
Note that this is equivalent to showing $\overline\partial E^0$ everywhere non-empty because of {\eqref{eq:Ereg1}} in Theorem~{\ref{thm:Ereg}}. Since $E^0$ is closed concave it is guaranteed to be weak-* upper semi-continuous, which is equivalent to weak semi-continuity because $X$ is reflexive. But weak (semi-)continuity always implies strong (semi-)continuity.
Finiteness of $E^0$ everywhere means $\dom(E^0) = X^*$ which yields also $\dom(\overline\partial E^0) = X^*$ by Corollary 2.38 in~\mbox{\cite{Babru-Precupanu}}. This gives us a non-empty superdifferential of $E^0_\eps$ everywhere.\\
\indent Next, we note that $F_\eps$ and $F_\eps^0$ are both G\^{a}teaux differentiable by Theorem~\ref{thm:258cropped} and start by checking the directional derivative of $F_\eps + x^*$ at $x_i$ in the step direction $x_{i+1}' - x_i$. From (a) in the KS scheme above, we have
	\begin{align*}
	&\langle \nabla F_\eps(x_i) + x^*,  x_{i+1}' - x_i \rangle \\
	&\quad = \langle x_{i+1}^* + \nabla F_\eps^0(x_i),  x_{i+1}' - x_i \rangle.
	\end{align*}
	If $x_{i+1}^* = - \nabla F_\eps^0(x_i)$ then from (a) $x^* = -\nabla
	F_\eps(x_i)$, which is the desired ground-state solution, and we have
	converged to the KS potential $\xks^*= x_{i+1}^*$. Otherwise, because of
	$x_{i+1}' \in\overline\partial E_\eps^0(x_{i+1}^*)$ and $x_i \in \overline\partial
	E_\eps^0(-\nabla F_\eps^0(x_i))$, we can invoke
	Lemma~\ref{lemma:monotone} for the strictly concave $E_\eps^0$ (see
	Remark~\ref{remark:strictly-concave}) and get
	\begin{align*}
	&\langle \nabla F_\eps(x_i) + x^*,  x_{i+1}' - x_i \rangle \nonumber\\
	&\quad = \langle x_{i+1}^* + \nabla F_\eps^0(x_i), x_{i+1}' - x_i
	\rangle <0.
	\end{align*}
	This means that we can always choose a maximal step size $t_i$ in (c) above such that
	\[
	F_\eps(x_{i+1}) + \langle x^*,x_{i+1}\rangle < F_\eps(x_i) + \langle x^*,x_i\rangle.	
	\]
	This sequence 
	$\{ F_\eps(x_i) + \langle x^*,x_i\rangle \}_i$ is by definition bounded below by $E_\eps(x^*)$ and hence convergent. \\
\indent Finally, we set  
	$e_\eps(x^*) = \lim_{i\to\infty} ( F_\eps(x_i) + \langle x^*,x_i\rangle )$. By \eqref{eq:Ereg1} of Theorem~\ref{thm:Ereg}
	\[
	F_\eps(x_i) + \langle x^*,x_i\rangle \geq E_\eps(x^*) = E(x^*) - \eps \phi(x^*).
	\]
	Consequently, $e_\eps(x^*) + \eps \phi(x^*)$ is an upper bound to the (non-regularized) ground-state energy $E(x^*)$.
\end{proof}

\begin{remark}\label{remark:gs}
The first part of the proof of Theorem~\ref{th:ks}, using finiteness  of $E^0$ to get a non-empty superdifferential $\overline\partial E^0(x^*)$ (or also $\overline\partial E^0_\eps(x^*)$) for all potentials $x^* \in X^*$, shows that we are apparently able to always find a state $x$ of minimal (non-interacting) energy. By Lemma~{\ref{lemma:equivdiff}} and Eq.~{\eqref{eq:varprob1}}
\begin{align*}
x \in \overline\partial E^0(x^*) &\iff -x^* \in \underline\partial F^0(x)\\
&\iff E^0(x^*) =  F^0(x) + \langle x^*,x \rangle
\end{align*}
we have that the infimum in \eqref{eq:Edef} with $\tilde F$ replaced by $\tilde F^0$ (or $F^0$) is a minimum. Reflexivity is needed to access the full range of superdifferentials that are defined in $X$ instead of $X^{**}$ (see discussion after Eq.~\eqref{eq:superdiff}), which is of no significance since reflexivity establishes $X \simeq X^{**}$.
\end{remark}
\begin{remark}\label{remark:wagner}
The modification of the Kohn--Sham iteration to include a reduced step size (see (c) in Theorem~\ref{th:ks}) 
and also the convergence of the energy quantity $F_\eps + x^*$ were modeled after Wagner et al.~\cite{Wagner2013}. These authors mistakenly claim that ``the KS algorithm described above is guaranteed to converge''. But neither was convergence proven in the usual sense within Banach spaces nor must a converging sequence of potentials lead to the correct $\xks^*$. Additionally, the work of Wagner et al.~\cite{Wagner2013} does not use the regularized version of the functionals and thus differentiability ($v$-representability) cannot be guaranteed. Further investigations of the Kohn--Sham iteration scheme within the framework established here is needed to determine whether a stronger version of convergence can be achieved.
\end{remark}

\section{Applications}
\label{sec:applications}

\subsection{Standard density-functional theory}
\label{sec:standard-dft}

The standard setting of DFT for an $N$-electron quantum system governed by Coulombic repulsion was pioneered by Lieb \cite{Lieb83}, but without the tools of Moreau--Yosida regularization or a study of the Kohn--Sham iteration scheme. Adopting the setting to our framework is straightforward. \\
\indent We now change to standard notation
and set $x=\rho$ and $x^*=v$. 
The Levy--Lieb functional $\tilde F$ is defined on the state space of physical densities~\cite{Lieb83}, the $N$-representable densities,
\begin{equation*}
\tilde X = \{ \rho \in L^1(\R^3) \mid \rho \geq 0,  \nabla \sqrt{\rho} \in L^2(\R^3), \|\rho\|_1 = N \},
\end{equation*}
which include all states of finite kinetic energy. Furthermore, $\tilde X \subset L^3(\R^3)\cap L^1(\R^3)$ by Sobolev embedding~\cite{Lieb83}. The functional is then given as the expectation value of the universal part of the standard Hamiltonian,
\begin{equation*}
H_{\lambda} = -\frac{1}{2} \sum_{i} \nabla_i^2 + \lambda \sum_{i<j} \frac{1}{r_{ij}},
\end{equation*}
with a constrained-search over all wave functions $\psi$ that yield a given density $\rho \in \tilde X$ (denoted $\psi\mapsto\rho$),
\begin{equation*}
\tilde F(\rho) = \inf_{\psi \mapsto \rho} \{ \langle \psi, H_1 \psi \rangle \}.
\end{equation*}
Note that $\lambda=1$ corresponds to the interacting system. The functional for the non-interacting ($\lambda=0$) Kohn--Sham system is similarly given by
\begin{equation*}
\tilde F^0(\rho) = \inf_{\psi \mapsto \rho} \{ \langle \psi, H_0 \psi \rangle \}
\end{equation*}
or a restriction of the minimization domain to only those $\psi$ that are Slater determinants. \\
\indent By the convex conjugate transformations \eqref{eq:Edef} and \eqref{eq:Fdef}, the functionals $E$ and $F$ are defined on the larger space of densities $L^3(\R^3) \cap L^1(\R^3)$ and the dual space of potentials $L^{3/2}(\R^3) + L^\infty(\R^3)$, which includes singular Coulomb potentials. With $X= L^3 \cap L^1$ and $X^* = L^{3/2} + L^\infty$, the convex Lieb functional $F$ is nowhere differentiable 
but the set of $v$-representable densities $\dom(\underline\partial F)$ is dense in the set of $N$-representable densities $\dom(F)$~\cite[Theorem 3.14]{Lieb83}.
The concave ground-state energy $E$ is finite on $X^*$~\cite[Theorem 3.1(iii)]{Lieb83}, 
superdifferentiable at all potentials $v$ that support an electronic ground state, and differentiable whenever the ground-state density is nondegenerate.
We have 
\begin{equation*}
-v \in \underline  \partial F(\rho) \iff \rho \in \overline \partial E(v).
\end{equation*}
Kvaal and Helgaker~\cite{KH2015} showed that  
\begin{equation*}
\rho \in \overline \partial E(v) \iff E(v) = \tr \Gamma_\rho H(v),
\end{equation*}
where $H(v)=H_1 + \sum_i v(r_i)$ is the Hamiltonian with potential $v$ and
$\Gamma_\rho$ is a ground-state density matrix with density $\rho$. It follows that 
$\overline \partial E(v)$ contains precisely all (ensemble) ground-state densities associated with the potential  $v$,
while $\underline \partial F(\rho)$ contains all potentials associated with the ground-state density $\rho$. Clearly, if $H(v)$
does not have a ground state, then $\overline \partial E(v) = \emptyset$. \newline
\indent The Hohenberg--Kohn theorem, as the cornerstone of DFT, drops out ``for free'' in the regularized version from strict monotonicity of $\overline \partial E_\eps$, see Section~\ref{sec:Mono}. This property follows in turn from $E_\eps$ being strictly concave. Yet without Moreau--Yosida regularization this strict concavity of $E$ is not at hand and arriving at the usual Hohenberg--Kohn theorem requires a refined analysis of the ground-state density that must not be zero on a set of nonzero measure~\cite{Lammert2018}. \newline
\indent To continue, the choice $X=L^3\cap L^1$ ($X^*=L^{3/2}+L^\infty$)
does not fit the framework developed here since the $L^1$-$L^\infty$ pair destroys reflexivity. A simple solution lies in just widening the density space to $X = L^3(\R^3)$, which includes $\tilde X$. The dual space for potentials is then restricted to $X^* = L^{3/2}(\R^3)$, which is reflexive as required by the above theorems. Coulomb potentials on all of $\R^3$ are then ruled out but are still included if the spatial domain is limited to a bounded $\Omega \subset \R^3$. Also, the non-interacting energy $E^0$ is everywhere finite on $X$ by the fact that $E$ has this property on $L^{3/2} + L^\infty$ (as remarked above). Thus, Theorem~\ref{th:ks} is applicable in this setting. Furthermore, under the assumption of reflexivity, and as noted in Remark~\ref{remark:gs}, finiteness of $E^0$ ($E$) gives a non-empty superdifferential $\overline\partial E^0(v)$ ($\overline\partial E(v)$). In DFT with $X=L^3$, if the corresponding density $\rho\in X$ is $N$-representable then we can even assign an associated ground-state wave function. This is similar to the domain (or box $[-l,l]^3$) truncation of \cite[Section III~A]{Kvaal2014}, where a ground state naturally exists for every potential. The setting of a free particle in infinite space, where clearly no ground state is at hand, must then be ruled out. The critical requirement is clearly reflexivity of $X$, which excludes the use of $L^1$ as a density space. \newline
%
%
%
%
\indent We remark that gauge symmetry can become complicated in a regularized setting. In Kvaal et al.~\cite[Section V~B]{Kvaal2014} it was shown that, in a Hilbert space setting where $v\in L^2(\Omega)$, the complications remain fairly mild. In the present example, with $v\in L^{3/2}(\Omega)$, the gauge symmetry becomes more unwieldy. We use the symmetry $E(v+c) = E(v) + cN$, with $c\in\mathbb{R}$ a constant shift, of the unregularized energy functional and the definition $E_{\epsilon}(v) = E(v) - \frac{\epsilon}{2} \|v\|_{3/2}^2$
to write
\begin{equation*}
  E_{\epsilon}(v+c) 
      = E_{\epsilon}(v) + cN - \frac{\epsilon}{2} \big( \|v+c\|_{3/2}^2 - \|v\|_{3/2}^2 \big).
\end{equation*}
Noting the functional derivative
\begin{equation*}
  \nabla \|v\|_{3/2}^2  = 2 \|v\|_{3/2}^{1/2} \, |v|^{1/2} \, \mathrm{sgn}(v)
\end{equation*}
and comparing to {Eq.~\eqref{eq:Ereg2}} above, we obtain
\begin{align*}
  \overline{\partial} &E_{\epsilon}(v+c) - \overline{\partial} E_{\epsilon}(v)  = \epsilon J^{-1}(v) - \epsilon J^{-1}(v+c) \\
    & = \epsilon \|v\|_{3/2}^{1/2} \, |v|^{1/2} \, \mathrm{sgn}(v) \\
&\quad - \epsilon \|v+c\|_{3/2}^{1/2} \, |v+c|^{1/2} \, \mathrm{sgn}(v+c),
\end{align*}
where sgn denotes the sign function. 
The fact that this difference does not vanish means that the potentials $v+c$ and $v$ map to different regularized ground-state densities.\newline
\indent Conventionally, the Kohn--Sham approach leads to the minimization problem (see the work of \cite{Hadjisavvas1984} and \cite{Laestadius2014b} for more technical details)
\begin{align}
E(v) &= \inf_{\psi} \{ F(\rho_\psi)+ \langle v,\rho_\psi\rangle   \} \nonumber \\
& = \inf_{\psi} \{ T(\psi) + J(\rho_\psi) + F_\mathrm{xc}(\rho_\psi)   +\langle v,\rho_\psi \rangle   \}, \label{eq:MinDet}
\end{align}
where $T(\psi)$ is the kinetic energy, $J(\rho_{\psi})$ is the Hartree term, and $F_{\mathrm{xc}}(\rho) = F(\rho) - F^0(\rho) - J(\rho)$ is the exchange-correlation functional.
Also, we have above used the notation $\rho_\psi$ to indicate that the density $\rho$ has been computed from $\psi$, 
i.e., 
\[
\rho_\psi(r) = \int_{\mathbb R^{3(N-1)}} |\psi|^2 \mathrm{d}r_2\cdots \mathrm{d}r_N.
\]
Glossing over the fact that $F(\rho)$ is not differentiable, the stationary condition for the above minimization gives the Kohn--Sham equations. When the iterative procedure defined in Theorem~\ref{th:ks} converges to the minimum, one obtains the Kohn--Sham potential $v_{\mathrm{KS}}$. Given this potential, it then only requires finding the ground state of a fixed, non-interacting Hamiltonian $H_0 + v_{\mathrm{KS}}$ in order to determine the Kohn--Sham wave function $\psi_{\mathrm{KS}}$ that solves the wave function optimization problem in \eqref{eq:MinDet} above.\\
\indent Another related problem is that of finding a Kohn--Sham potential $v_{\mathrm{KS}}$ that reproduces the density $\rho_{\mathrm{gs}}$ of the ground state of the interacting Hamiltonian $H_1+v_{\mathrm{ext}}$. This can be done using the optimization problem
\begin{equation}
\label{eq:LiebOpt} 
F^0(\rho_{\mathrm{gs}}) = \sup_{v} (E^0(v) - \langle v, \rho_{\mathrm{gs}} \rangle ).
\end{equation}
When the iterative procedure defined in Theorem~\ref{th:ks} converges to the minimum of the energy functional, it in fact solves this problem too {\it in a way that does not require a priori knowledge of $\rho_{\mathrm{gs}}$}. Instead, the interacting density $\rho_{\mathrm{gs}}$ is specified only implicitly by specifying the external potential $v_{\mathrm{ext}}$.\\
\indent In summary, although the Kohn--Sham wave function optimization problem \eqref{eq:MinDet} and the Lieb optimization problem \eqref{eq:LiebOpt} are distinct, the iterative procedure analyzed above addresses both problems. However,
%
%
	in solving the $v$-representability problem, we have introduced non-$N$-representable densities, that is, densities $\rho \notin \dom(F)$. The question then arises regarding the representation of such densities. The non-$N$-representable densities are obtained in step (b) of Theorem~\ref{th:ks} as a supergradient of the
	regularized non-interacting energy $ \rho_{i+1}' \in \overline \partial E_\eps^0(v_{i+1})$. Since 
	\begin{equation*}
	\overline \partial E_\eps^0(v_{i+1}) = \overline \partial E^0(v_{i+1}) - \eps J^{-1}(v_{i+1}),
	\end{equation*}
	we may first select an element of $\overline \partial E^0(v_{i+1})$ in the usual manner (by solving
	the non-interacting Schr\"odinger equation%
	) and then add the regularization correction $ - \eps J^{-1}(v_{i+1})$. See Section~VI.B in~\cite{Kvaal2014} for further details on how the regularization modifies the Kohn--Sham eigenvalue problem.

\subsection{Current-density-functional theories}

In current-density-functional theory (CDFT), both the paramagnetic current density and the total (physical) current density can be used together with the particle density $\rho$. We refer to \cite{Tellgren2012} and \cite{LaestadiusBenedicks2014} for a discussion of the choice of variables in CDFT. For the specific case of uniform magnetic fields, the current-density degrees of freedom can be reduced into a theory that has been named linear vector potential-DFT (LDFT) \cite{Tellgren2018}. \\
\indent The work of Lieb in DFT \cite{Lieb83} was in parts continued into paramagnetic CDFT in \cite{Laestadius2014}, where it was proven that each component of the paramagnetic current density is an element of $L^1(\mathbb R^3)$. However, since $L^1$ does not fulfill the requirements presented here, further analysis of function spaces for the paramagnetic formulation is needed. Nonetheless, we conjecture that each component of the paramagnetic current is an element of $L^1\cap L^q$ for $1<q<2$ and we suggest $L^{3/2}$ as a suitable space to choose. We moreover point out that the work in 
\cite{Laestadius2014b} only addressed the problem of $v$-representability by generalizing the work in \cite{Hadjisavvas1984} to include paramagnetic current densities. The problem of differentiability was not dealt with. The application of the theory outlined here to CDFT formulated with the paramagnetic current density is left for future work and will be based on the above conjecture.\\
%
%
%
%
%
%
\indent As far as the total (physical) current density is concerned, recent work has established a density-functional theory based on the
Maxwell--Schr\"odinger model~\cite{Tellgren2017}. In this theory, the potential space contains pairs $x^*=(v,B)$ of
electrostatic scalar potentials and magnetic fields. Holding $v$
fixed, the magnetic self energy plays the role of $\phi$ above,
yielding a ground state energy $E(x^*)$ that is already a Moreau--Yosida
regularization 
with respect to the argument $B$. The formalism
admits construction of a universal density functional $F(x)$ defined
for pairs $x = (\rho,\beta)$, where $\rho$ is the electron density and
$\beta$ a type of internal magnetic field that plays the role of an
independent variational parameter. Moreover, for any fixed $\rho$ in
its domain, the universal functional is differentiable with respect to
$\beta$. A further regularization with respect to $\rho$ results in
functionals $E_{\varepsilon}(x^*)$ and $F_{\varepsilon}(x)$ that are within
the scope of the above convergence result. \newline
%
%
%
%
\indent The Maxwell--Schr\"odinger model is itself an approximation to a more
complete description taking into account the quantized nature of the
light field that generates the internal magnetic field $\beta$ \cite{RuggiFlickAppel2018}. This
more complete description is based on the Pauli--Fierz Hamiltonian of
non-relativistic quantum electrodynamics \cite{RuggiFlickAppel2018,Spohn2004}, which describes the
interaction among charged particles (electrons and effective nuclei) by
the exchange of photons, the fundamental gauge bosons of the
electromagnetic force \cite{Ryder1996}. Consequently, the resulting
density-functional reformulation is a multi-component theory of fermions
and bosons \cite{Rugg2017} and we have two potentials  $x^{*}=(v,j)$ that act on the respective
particle families. Here $v$ is the usual electrostatic
scalar potential acting on the electrons and $j$ is an external
classical transversal charge current that acts on the photons. Using
the standard Maxwell relations, this external current can be directly
related to a unique classical magnetic field $B$ \cite{Rugg2017}. The conjugate pair
$x=(\rho,A)$ is then the usual electronic density and the transversal
electromagentic vector potential $A$ that is generated by the photons of
the coupled matter-photon system. Again, by using the Maxwell relations, 
the transversal vector potential $A$ is uniquely associated with an
internal magnetic field $\beta$. The above discussed Moreau--Yosida regularization can then be applied to quantum-electrodynamical DFT, and a rigorous
Kohn--Sham iteration scheme based on coupled Maxwell--Pauli--Kohn--Sham
equations can be introduced. We note that the presented
density-functionalization and generalization of the Kohn--Sham procedure are applicable in a straightforward manner to other coupled fermion-boson
problems. This highlights the applicability of the introduced approach beyond the usual confinement of traditional
electronic-structure theory.

\section*{Acknowledgments}
We are very grateful towards an anonymous referee who not only highlighted some important mistakes in a draft of this work but also hinted us towards possible solution schemes.
This work was supported by the Norwegian Research Council through the
CoE Hylleraas Centre for Quantum Molecular Sciences Grant No.~262695. 
A.L. is grateful for the hospitality received at the Max Planck Institute for the Structure and Dynamics of Matter in Hamburg, while visiting  MP and MR.
MP acknowledges support by the Erwin Schr\"odinger Fellowship J 4107-N27 of the FWF (Austrian Science Fund).
AL and SK was supported by ERC-STG-2014 under grant agreement No.~639508.
EIT was supported by the Norwegian Research Council through Grant No.~240674. 
TH is grateful to the Centre for Advanced Study at the Norwegian Academy of Science and Letters, Oslo, Norway, where parts of this work was carried out.


\begin{thebibliography}{99}%
	\bibitem{HK64}
	P.~Hohenberg and W.~Kohn, Phys.~Rev.~B, \textbf{136}, 864 (1964).

	\bibitem{Lieb83}
	E.~H.~Lieb, Int.~J.~Quant.~Chem.~\textbf{24}, 243 (1983).
	
	\bibitem{KS}
	W.~Kohn and L.~J.~Sham, Phys.~Rev.~\textbf{140}, A1133 (1965).
	
	
	\bibitem{Cances2000}
	E.~Canc{\`e}s and C.~Le Bris, Int.~J.~Quant.~Chem.~\textbf{79}, 82 (2000).
	
	\bibitem{Cances2000b}
	E.~Canc{\`e}s, in \emph{Mathematical Models and Methods for ab initio Quantum Chemistry}, edited by M.~Defranceschi and C.~Le Bris, Lecture Notes in Chemistry, Vol.~74 (Springer, 
	2000). 
	
	\bibitem{Cances2001}
	E.~Canc{\`e}s, J.~Chem.~Phys.~\textbf{114}, 10616 (2001).
	
	\bibitem{Wagner2013}
	L.~O.~Wagner, E.~M.~Stoudenmire, K.~Burke, and S.~R.~White, Phys.~Rev.~Lett.~\textbf{111}, 093003 (2013).
	
	\bibitem{Kvaal2014}
	S.~Kvaal, U.~Ekstr\"om, A.~M.~Teale, and T.~Helgaker, J.~Chem.~Phys.~\textbf{140}, 18A518 (2014).
	
	\bibitem{Tellgren2017}
	E.~I.~Tellgren, Phys.~Rev.~A \textbf{97}, 012504 (2018).
	
	\bibitem{Rugg2017}
	M.~Ruggenthaler,
	\href{https://arxiv.org/abs/1509.01417v2}{arxiv:1509.01417v2} (2017).
	
	\bibitem{Eich2016}
	F.~G.~Eich, M.~Di Ventra, and G.~Vignale, J.~Phys.: Condens.~Matter
	\textbf{29}, 063001 (2016).
	
	\bibitem{Pernal2015}
	K.~Pernal and K.~J.~H.~Giesbertz, in \emph{Density-Functional Methods for
		Excited States}, edited by N.~Ferré, M.~Filatov, and M.~Huix-Rotllant,
	Topics in Current Chemistry, Vol.~368 (Springer, 
	2015).
	
	\bibitem{Giesbertz2017}
	K.~J.~H.~Giesbertz and M.~Ruggenthaler,
	\href{https://arxiv.org/abs/1710.08805}{arxiv: 1710.08805} (2017).
	
	\bibitem{Levy79}
	M.~Levy, PNAS \textbf{76}, 6062 (1979).

	\bibitem{Lammert2007}
	P.~E.~Lammert, Int.~J.~Quantum Chem.~\textbf{107}, 1943 (2007).

	\bibitem{Babru-Precupanu}
	V.~Barbu and T.~Precupanu, \textit{Convexity and Optimization in Banach Spaces} (4th edition, Springer, 2012).
	
	\bibitem{Day1941}
	M.~M.~Day, Bull.~Amer.~Math.~Soc.~Volume \textbf{47}, 313 (1941).
	
	
	\bibitem{Megginson1998}
	R.~E.~Megginson, \textit{Introduction to Banach Space Theory} (Springer, 1998)	
	
	\bibitem{Blanchard-Bruening}
	P.~Blanchard and E.~Brüning, \textit{Mathematical Methods in Physics: Distributions, Hilbert Space Operators, Variational Methods, and Applications in Quantum Physics} (2nd edition, Birkhäuser, 2015).
	
	\bibitem{KH2015}
	S.~Kvaal and T.~Helgaker, J.~Chem.~Phys.~\textbf{143}, 184106 (2015).
	
	\bibitem{Lammert2018}
	P.~E.~Lammert, J.~Math.~Phys.~\textbf{59}, 042110 (2018).
	
	\bibitem{Hadjisavvas1984}
	N.~Hadjisavvas and A.~Theophilou, Phys.~Rev.~A \textbf{30}, 2183 (1984).
	
	\bibitem{Laestadius2014b}
	A.~Laestadius, J.~Math.~Chem.~\textbf{52}, 2581 (2014).
	
	\bibitem{Tellgren2012}
	E.~I.~Tellgren, S.~Kvaal, E.~Sagvolden, U.~Ekstr\"om, A.~M.~Teale and T.~Helgaker, Phys.~Rev.~A \textbf{86}, 062506 (2012).
	
	\bibitem{LaestadiusBenedicks2014}
	A.~Laestadius and M.~Benedicks, Int. J. Quant. Chem. \textbf{114}, 782 (2014).
	
	\bibitem{Tellgren2018}
	E.~I.~Tellgren, A.~Laestadius, T.~Helgaker, S.~Kvaal and A.~M.~Teale, J.~Chem.~Phys.~\textbf{148}, 024101 (2018).
	
	\bibitem{Laestadius2014}
	A.~Laestadius, Int.~J.~Quant.~Chem.~\textbf{114}, 1445 (2014).
	
	\bibitem{RuggiFlickAppel2018} 
	M.~Ruggenthaler, N.~Tancogne-Dejean, J.~Flick, H.~Appel, A.~Rubio, Nature Reviews Chemistry \textbf{2}, 0118 (2018).
	
	\bibitem{Spohn2004} 
	H.~Spohn, \emph{Dynamics of charged particles and their radiation field} (Cambridge University Press, 2004).
	
	\bibitem{Ryder1996}
	L.~H.~Ryder, \emph{Quantum field theory} (Cambridge University Press, 1996).
\end{thebibliography}
\end{document}